\documentclass[3p]{elsarticle} 
\makeatletter
\def\ps@pprintTitle{%
 \let\@oddhead\@empty
 \let\@evenhead\@empty
 \def\@oddfoot{\centerline{\thepage}}%
 \let\@evenfoot\@oddfoot}
\makeatother

\date{}
\def\texpsfig#1#2#3{\vbox{\kern #3\hbox{\includegraphics{#1}\kern #2}}\typeout{(#1)}}

\usepackage{url,hyperref}
\usepackage{bbm}
\usepackage{eurosym}                           
\usepackage[latin1]{inputenc}                  
\usepackage{url}                               
\usepackage{longtable}                         
\usepackage{array}                             
\usepackage{graphicx,color}
\usepackage{amsthm}
\usepackage{amsbsy}

\usepackage{amssymb}
\usepackage{amsmath}
\usepackage{enumerate}
\usepackage{graphicx,color}
\usepackage{epsfig}
\usepackage{multirow,bigdelim}
\usepackage[table]{xcolor}
\usepackage{natbib}
\setlength{\bibsep}{0.0pt}

\theoremstyle{plain}
\newtheorem{thm}{Theorem}[section]

\newtheorem*{rem}{Remark}
\theoremstyle{remark}

\theoremstyle{plain}
\newtheorem{lem}[thm]{Lemma}

\theoremstyle{definition}

\newcommand{\e}{{\rm e}}        
\def\R{\mathbb{ R}}             
\def\E{\mathbb{ E}}             
\def\Q{\mathbb{ Q}}             

\def\P{\mathbb{ P}}             

\def\F{\mathcal{F}}             

\def\var{\mathbb{V}\text{ar}}   
\def\Var{\mathbb{V}\text{ar}}   
\renewcommand{\d}{{\rm d}}      
\def\dW{{\rm d}W}               
\def\dt{{\rm d}t}

\def\dx{{\rm d}x}

\def\T{{\rm T}}
\def\1{{\mathbbm{1}}}            

\theoremstyle{plain}

\usepackage[margin=1cm]{caption}

\usepackage{algorithm}
\usepackage[noend]{algpseudocode}

\geometry{left=1.3in, right=1.3in, top=1in,bottom=1in, includefoot, headheight=13.6pt}	         
\captionsetup{margin=1cm,font=small}                
\numberwithin{equation}{section}	     
\title{Sparse Grid Method for Highly Efficient Computation of Exposures for xVA}
\begin{document}
\author[1,2]{Lech A.~Grzelak\corref{cor1}}
\ead{L.A.Grzelak@tudelft.nl}
\cortext[cor1]{Corresponding author at Delft Institute of Applied Mathematics, TU Delft, Delft, the Netherlands.}
\address[1]{Delft Institute of Applied Mathematics, Delft University of Technology, Delft, the Netherlands}
\address[2]{Rabobank, Utrecht, the Netherlands}

\begin{abstract}
    \noindent Every ``x''-adjustment in the so-called xVA financial risk management framework relies on the computation of exposures.  Considering thousands of Monte Carlo paths and tens of simulation steps, a financial portfolio needs to be evaluated numerous times during the lifetime of the underlying assets. This  is the bottleneck of every simulation of xVA.

In this article, we explore numerical techniques for improving the simulation of exposures. We aim to decimate the number of portfolio evaluations, particularly for large portfolios involving multiple, correlated risk factors. The usage of the Stochastic Collocation (SC) method~\cite{grzelak2015stochastic}, together with Smolyak's~\cite{Smol63,JUDD2014} sparse grid extension, allows for a significant reduction in the number of portfolio evaluations, even when dealing with many risk factors. The proposed model can be easily applied to any portfolio and size.We report that for a realistic portfolio comprising linear and non-linear derivatives, the expected reduction in the portfolio evaluations may exceed 6000 times, depending on the dimensionality and the required accuracy. We give illustrative examples and examine the method with realistic multi-currency portfolios consisting of interest rate swaps and swaptions.
\end{abstract}

\begin{keyword}
Stochastic Collocation, SC, xVA, Valuation Adjustment, Expected Exposures, Smolyak's Sparse Grids, Chebyshev polynomials, Clenshaw-Curtis.
\end{keyword}
\maketitle

\section{Introduction}
\label{sec:introduction}

Since the Basel Committee introduced their requirements for the credit exposures computation~\cite{Besel}, many banks needed to reform their pricing infrastructure.
Calculation of risk indicators like expected exposure (EE) or potential future exposures (PFE) is crucial to assess the safety of a bank's positions against market movements in the future. From the pricing perspective, however, these calculations require substantial computational effort. Both indicators need hypothetical scenarios using models calibrated to the financial market. These scenarios represent potential movements of the risk factors in the future. Therefore, to ``measure'' the bank's exposure, the portfolios need to be evaluated for many future scenarios~\cite{GregoryBook,OosterleeGrzelakBook}, which is a highly intensive task, especially for large portfolios involving thousands of trades that depend on different risk factors.

For large financial institutions, like banks and hedge/pension funds, most of the products are linear, non-exotic. This is particularly true since illiquid derivatives are often capital intensive due to regulatory requirements. Although a portfolio of linear products is straightforward to value, because of the volume of trades (often exceeding hundreds of thousands of trades), the complete xVA computation may take multiple hours. Especially if a portfolio involves many long-dated swaps with daily compounding or averaging~\footnote{Daily compounding of rates can be seen, among others, in SOFR, ESTR, FF, Brazilian market.}.

Evaluation of xVA poses a significant challenge to researchers and practitioners to improve the required  computational time. Although the problem exists for more than a decade already, no genuine progress in  efficiency has been achieved. The common approach to enhance the speed focuses on reducing the number of Monte Carlo paths or exposure (monitoring) dates. Either of these choices is at the cost of quality and stability of the ultimate results. An alternative way to improving computational speed is significant investments in hardware, like multi-core CPUs/GPUs.~\cite{GreenXVA,CrepeyGPUXVA}.

This article focuses on the efficient computation of exposures, where many risk factors rule out traditional PDE-based techniques. Alternative approaches to improve the exposure computation are also known. Deep learning, for example, is discussed in~\cite{AnderssonOosterlee}. A combination of deep learning, GPUs, and forward/backward SDEs can be found in~\cite{CrepeyBalanceSheet}. The approach proposed in this article provides a novel addition to the methods known in the literature.

Despite all these efforts to improve computation speed, fast evaluation of exposures for a portfolio is still open, especially when dealing with computationally intensive calculations of the related sensitivities. This article aims to take a step forward in reducing the computational effort for simulating exposures. The technique presented is also well-suited for parallelization and thus for GPU computation.

We will develop a highly accurate and fast numerical scheme. For this, we will employ the Stochastic Collocation (SC) method~\footnote{Generally, the term {\it collocation} denotes techniques that estimate deterministic or stochastic variables by finding a linear predictor from a finite set of observations}, developed in~\cite{grzelak2015stochastic}, as an efficient approach for approximating distribution functions. The distribution function of interest is then expanded as a polynomial in terms of a random variable that is cheap to sample from at given collocation points, and interpolation occurs between these points. Stochastic collocation points have a specific meaning, i.e., they represent critical features of the probability distribution of interest. The SC method enables us to generate samples from a complex distribution by mainly using interpolation efficiently.

Although in the 1D (one risk factor) case, the number of collocation points, $n_1$, is typically small~\footnote{Typically it varies from 3 to 6 to guarantee high-quality approximation (see~\cite{grzelak2015stochastic} for more details).}, an extensive system of, let us say, $d$ risk factors would require $n_1^d$ collocation points. Thus, it is subject to the so-called curse of dimensionality. Therefore, for large systems of SDEs, as often seen in the xVA context, this is not desirable.
However, this number can be reduced using the sparse grid approach, introduced by Smolyak in~\cite{Smol63}. The sparse grid algorithm in Smolyak's work is concentrated on multidimensional integration and high-dimensional interpolation. The significant advantage of the algorithm is that the number of grid points does not grow exponentially with the dimension, but only polynomially, meaning that the Smolyak method is not subject to the curse of dimensionality. The algorithm  has been associated with sparse grids, hyperbolic cross approximation, sparse tensor products, and various applications~\cite{Smol63_2}. In particular, variants of Smolyak's algorithm have been employed in the computation of high-dimensional integrals, in the numerical solution of PDEs and SDEs, and uncertainty quantification~\cite{UncertaintyQuantHeston}.

It is also worth mentioning that, recently, the reduction of the number of option valuations using polynomial interpolation has been addressed with Chebyshev interpolation in~\cite{Gau2018}. The collocation method with Lagrange interpolations for arbitrage-free option pricing is discussed in~\cite{GrzelakArbitrage:2016}. Practical aspects of fast portfolio evaluation using interpolation for xVA are covered in~\cite{Ruiz2018}, and in~\cite{GlauPachonPotz2020} the application of Chebyshev interpolation for exposure calculation for the 1D case of Bermudan interest rate swaptions has been presented. Moreover, alternatives to the sparse grid approach for dealing with multi-dimensionality problems in the context of derivatives pricing exist. Recently, a low-rank tensor approximation gained increasing interest in applications to derivative pricing~\cite{Ruiz2021,Glau2020_LowRank}. The comparative study of the two methods in the context of uncertainty quantification is presented in~\cite{HighDimensionalUncertainty}. It was shown that the SC method on sparse grids appears to be computationally more efficient, at the cost of accuracy.

The present article is organized as follows: in Section~\ref{sec:2}, we introduce the Stochastic Collocation (SC) method and discuss exposure computation for portfolios depending on single and multi-factor models together with Smolyak's sparse grid algorithm. Section~\ref{sec:3} is a numerical section where the developed method is applied to realistic portfolios of interest rate swaps in multi-currencies. Section~\ref{sec:4} focuses on implementation details and improvements. Discussion of error analysis and convergence is covered in Section~\ref{sec:Error}. Concluding remarks are in Section~\ref{sec:conclusions}.

\section{Computations of Exposures with the SC Method}
\label{sec:2}
Mathematically, the positive and negative exposures, $E^+(t,{\bf X}(t))$, $E^-(t,{\bf X}(t))$, are defined as,
\begin{eqnarray}
\label{eqn:exposure}E^+(t,{\bf X}(t))&:=&\max(V(t,{\bf X}(t)),0),\;\;\;\;E^-(t,{\bf X}(t)):=\max(-V(t,{\bf X}(t)),0),
\end{eqnarray}
with
\begin{eqnarray}
\label{eqn:V}V(t,{\bf X}(t))&=&\E^\Q\left[\sum_{j=1}^{L}\frac{M(t)}{M(T_j)}H(T_j,{ {\bf X}}(T_j))\Big|\F(t)\right],\;\;T_j>t,
\end{eqnarray}
where $V(t,{{\bf X}}(t))$ represents the discounted value of a payoff $H(T_j,{{\bf X}}(T_j))$ at time $t$. The payments are taking place at $T_j$, $j=1,\dots,L$, with $T_j>t$, thus only outstanding payments are considered in the exposure computation. ${\bf X}(t)$ indicates a {\it risk factor} on which the derivative $H(\cdot)$ depends, and $M(t)$ stands for the money-savings account.

In practice, the exposures are computed per netting set, often involving hundreds of trades, thus with a netted portfolio involving $\bar{M}$ trades and $d$ different risk factors, the value $V(\cdot)$ is given by:
\begin{eqnarray}
V(t,{\bf X}(t)):=\sum_{i=1}^{\bar M}V_i(t,{\bf X}(t)),\;\;\;{\bf X}(t)=[X_1(t),\dots,X_d(t)]^\T.
\end{eqnarray}
Each of the risk factors in ${\bf X}(t)$ represents a stochastic process that influences the value of the portfolio. These can be interest rates in different currencies, stocks, inflation, foreign exchange, or commodities. The number of risk factors typically varies in time and depends on the portfolio composition.

Every component of xVA will depend on the exposure computation for any exposure date $T_i$. In a general setting, it can be represented as follows:
\begin{eqnarray}
\text{xVA}(t_0) &=& \int_{t_0}^T\E^\Q\left[\frac{M(t_0)}{M(t)}\chi(t,V(t,{\bf X}(t)))\Big|\F(t_0)\right]\dt\nonumber\\&\approx& \sum_{k=1}^{N_T}\E^\Q\left[\frac{M(t_0)}{M(T_k)}\chi(T_k,V(T_k,{\bf X}(T_k)))\Big|\F(t_0)\right]\Delta t,
\end{eqnarray}
with some generic function of exposures $\chi(t,x)$ and a discretization grid, $T_1,\dots,T_{N_T}$. In the case of CVA, for example~\footnote{In these illustrative examples we assume no Wrong-Way-Risk, however, the methodology would stay intact even when these assumptions were relaxed.}, $\chi(t,x)$ reads $\chi(t,x)=(1-R_c)E^+(t,x)f_D(t),$ with $E^+(t,x)$ defined in~(\ref{eqn:exposure}) and $f_D(t)$ being the default probability~\footnote{Default probability $f_D(t)$ depends on a particular ``x'' in xVA and may involve multiple counterparties. Moreover, its values will depend on the whole discretized interval $[t,t+\Delta t].$}, $R_c$ the recovery rate.


In order to estimate the value of the portfolio $V(t,{\bf X}(t))$ with $N_p$ simulated Monte Carlo paths, the portfolio needs to be evaluated $N_p$-times at each exposure date $T_k$. For a portfolio involving multiple risk factors, the number of paths can be in the range of tens of thousands. From a computational perspective, a low number of portfolio evaluations is desired.

In this article, we propose a method that focuses on a drastic reduction of the number of portfolio evaluations. The proposed method relies on the approximation of the multi-dimensional portfolio $V(t,{\bf X}(t))$ by an approximating function, $\widetilde g(\{V\}_{i_1,\dots,i_d},{\bf X}(t)),$ where $\{V\}_{i_1,\dots,i_d}:=V(t,\{{\bf X}\}_{i_1,\dots,i_d})$. Function $\widetilde g(\cdot)$ is built based on only a few evaluations of the portfolio $V(t,\{{\bf X}\}_{i_1,\dots,i_d})$ on the set of so-called ``collocation points'', $\{{\bf X}\}_{i_1,\dots,i_d}$, obtained from the SC method (SC)~\cite{grzelak2015stochastic}. Intuitively, these collocation points can be understood as {\it optimal} quadrature points that describe the underlying random variable.  The idea behind the collocation method is, given the uncertain factors ${\bf X}(t)$, to determine the collocation points, $\{{\bf X}\}_{i_1,\dots,i_d}$, being the zeros of an orthogonal polynomial based on variable ${\bf X}(t).$

Once the approximating function $\widetilde g(\cdot)$ is determined then the computation of xVA is done as follows:
\begin{eqnarray}
\label{eqnxVA}
\text{xVA}(t_0)\approx \sum_{k=1}^{N_T}\E^\Q\left[\frac{M(t_0)}{M(T_k)}\chi(T_k,\widetilde g(\{V\}_{i_1,\dots,i_d},{\bf X}(T_k)))\Big|\F(t_0)\right]\Delta t,
\end{eqnarray}
with $\{V\}_{i_1,\dots,i_d}:=V(T_k,\{{\bf X}\}_{i_1,\dots,i_d})$. Function $\widetilde g(\cdot)$ does not require portfolio evaluations for every Monte Carlo path. The portfolio evaluation takes place only at the ``optimal'' points $\{{\bf X}\}_{i_1,\dots,i_d}$ that are determined based on the SC method. Once the approximating function $\widetilde g(\cdot,{\bf X}(t))$ is established, it is evaluated for all Monte Carlo paths. This computation, then, is extremely cheap, as function $\widetilde g(\cdot)$ would typically have a polynomial form constructed by the SC method.



In order to measure the quality of the approximations, we will consider discounted expected (positive) exposures defined for the exposure date, $T_k$, as:
\begin{equation}
\label{eqn:EE}
EE(t_0,T_k)=\E^\Q\left[\frac{M(t_0)}{M(T_k)}E^+(T_k,{\bf X}(T_k))\big|\F(t_0)\right],
\end{equation}
with $M(t)$ is the money-savings account where positive exposures $E^+(T_k,{\bf X}(T_k))$ are defined in~(\ref{eqn:exposure}) which, for the approximating function $\widetilde g(\cdot)$, reads:
\begin{eqnarray*}
E^+(T_k,{\bf X}(T_k)):=\max(V(T_k,{\bf X}(T_k)),0)&\approx& \max(\widetilde g(V(T_k,V(T_k,\{{\bf X}\}_{i_1,\dots,i_d}),{\bf X}(T_k)),0).
\end{eqnarray*} Another metric to measure the quality is the so-called {\it Potential Future Exposure} (PFE) which measures the maximum credit exposure calculated at some confidence level. The measure can be associated with measuring the quality of the approximating function $\widetilde g(\cdot)$ in the tails of the distribution of $V(t,{\bf X}(t))$, at $t=T_k$. PFE, at time $t$, i.e. $\text{PFE}(t_0,t)$, is defined as a quantile of the positive exposure $E^+(t,{\bf X}(t))$,
\begin{equation}
\label{eqn:PFE}
\text{PFE}(t_0,t) = \inf\{x\in\R:p\leq F_{E^+(t,{\bf X}(t))}(x)\},
\end{equation}
where $F_{E^+(t,{\bf X}(t))}(x)$ is the CDF of positive exposures observed at time $t$. Coefficient $p$ represents the  certainty level, i.e., the quantile level.

The SC method is considered an accurate simulation technique to approximate these quantities; while only a few portfolio evaluations are required to get sufficient accuracy. Typically, the number of points is low and varies from 3 to 6 per risk factor, independently of the number of trades in the underlying portfolio. Once the number of risk factors increases, it is no longer beneficial to apply the standard SC method~\footnote{The standard SC method is associated with a Cartesian grid built based on the collocation points constructed based on the corresponding basis for orthogonal polynomials.}, but we have to switch to a sparse-grid approach, where the number of portfolio evaluations will not suffer from the curse of dimensionality. In the follow-up section, we will provide details on the computational cost associated with exposure computation under the SC method.


\begin{rem}[Machine Learning and Exposure]
One may consider building the approximating function $g(\cdot)$ as a process of supervised learning of the portfolio value $V(t,{\bf X}(t))$. In the standard machine learning methods, supervised learning involves expensive (portfolio) evaluations to determine the landscape of the objective function in the off-line stage. The method presented in this article may significantly improve the computations in such an expensive off-line stage. A similar idea of a so-called Compression-Decompression technique has been recently presented in~\cite{Grzelak:7L} in the context of Monte Carlo simulation using deep learning.
\end{rem}

\subsection{Optimal Points in Low and High Dimensions}
\label{sec:2_1}
Let us start with some background on the collocation method.
The proposed method was used to approximate an {\it expensive } random variable $Y$ utilizing a {\it cheap} variable
$X$. An approximation is made based on the inversion of the CDF of $Y$ at only a small set of collocation points, being the zeros of an orthogonal polynomial based on variable $X$.

Since any CDF is uniformly distributed, we have $F_Y(Y)\stackrel{\d}=F_X(X).$  From this representation, realizations of $Y$, $y_n$, and $X$,
$x_n$, are connected via the following inversion relation,
\begin{equation}
\label{eqn:CDF_inversion2} y_n=F_Y^{-1}(F_X(x_n))=:q_{Y}(F_X(x_n)),
\end{equation}
with $q_{Y}(p)=\min\{y \in\R:F_Y(y)\geq p\}$ indicating a quantile function.
The target is to determine an alternative relationship that does not require many of the ``expensive'' inversions $F_Y^{-1}(\cdot)$ for all samples of $X$.
The task is thus to find an approximation for function $g(\cdot)= F_Y^{-1}(F_X(\cdot)) $
such that
\[F_X(x)=F_Y(g(x)) \;\;\;\text{and}\;\;\; Y\stackrel{\d}=g(X),\]
where evaluations of function $g(\cdot)$ do not require the inversions $F_Y^{-1}(\cdot)$. With a mapping $g(\cdot)$
determined, the CDFs $F_X(x)$ and $F_Y(g(x))$ are not only equal in the distributional sense but also element-wise~\cite{grzelak2015stochastic}.

Sampling from random variable $Y$ can be decomposed into sampling from a
cheap random variable $X$ and a transformation to $Y$ via $g(\cdot)$, i.e., $y_n=g(x_n)$.
It is important to choose $g(\cdot)$ to be a basic function. To find a proper mapping function, we need to extract some information from $Y$.

An efficient method for sampling from variable $Y$ in terms of variable $X$ is obtained
by defining $g(\cdot)$ to be a {\em polynomial expansion}, i.e.
\begin{equation}
\label{eqn:lagrange_1} y_n\approx
\widetilde{g}(x_n)=\sum_{i=1}^{n_1}y_i\psi_i(x_n),\;\;\;\psi_i(x_n)=\prod_{j=1,\\i\neq j}^{n_1}\frac{x_n- x_j}{ x_i- x_j},
\end{equation}
where $x_n$ is a sample from $X$ and $ x_i$ are so-called {\it
collocation points}, $y_i$ is the exact evaluation at collocation
point $x_i$ in~(\ref{eqn:CDF_inversion2}), i.e.,
$y_i = F_Y^{-1}(F_X( x_i))$ in~(\ref{eqn:lagrange_1}) and $\psi_i(x_n)$ represent unidimensional basis functions. As mentioned earlier, $y_i$, is simply the $F_X(x_i)-$quantile thus, in essence, Equation~(\ref{eqn:lagrange_1}) represents an approximation of a random variable $Y$ through interpolation of its quantiles.
There are multiple ways of choosing the collocation points $x_i$. In the standard SC~\cite{grzelak2015stochastic} method the ``optimal'' points are the quadrature points of random variable $X$. One can also consider these points to be determined by equally spaced quantiles, $p_1,\dots,p_{n_1}$ with $x_i:= q_{X}(p_i)=\min\{x \in\R:F_X(x)\geq p_i\}.$

Knowing the principles behind the SC method, we need to adapt it to our problem at hand; namely, the approximation of the portfolio given a particular stochastic driver $X$. A direct application of the SC method, in Equation~(\ref{eqn:CDF_inversion2}), would require the availability of the portfolio distribution. Unfortunately, such a distribution is not available and can only be determined based on simulation- thus, precisely what we wish to avoid. Therefore, the analogy between cheap and expensive variables in the SC method is not applicable in reducing portfolio evaluations. On the other hand, we know that the value of a portfolio $V(t,X)$ is a function of the stochastic variable $X$; therefore, we can utilize the SC method to interpolate the value of a portfolio $V(t,x_i)$, $i=1,\dots,n_1$, where $x_i$'s are the collocation points, determined based on stochastic quantity, $X(t)$.\\
The SC method can be extended to multiple inputs using a tensor product of quadrature points, i.e. for $d$-dimensions we have:
\begin{equation}
\label{eqn:g_N_d}
\widetilde g(x_{1},\dots, x_{d}) =\sum_{j_1=1}^{n_1}\cdots\sum_{j_{d}=1}^{n_d} y_{j_1,\dots,j_d}\psi_{j_1,\dots,j_d}(x_{1},\dots,x_{d}),
\end{equation}
where $x_{1},\dots,x_{d}$ are the collocation for each dimension, $1,\dots,d$. Although the representation~(\ref{eqn:g_N_d}) constitutes a natural extension of the 1D case,  once the number of uncertain parameters increases, approximations based on these tensor product grids become inefficient because the number of collocation points in a tensor grid grows exponentially in dimension, i.e., $n_1{\cdot}\ldots \cdot n_d$. In such a scenario, one may consider sparse tensor product spaces as first proposed by Smolyak~\cite{Smol63}. More precisely, the Smolyak sparse grid
SC method for approximating statistical quantities is used here to reduce the exponential increase of the number of tensor product quadrature points in $d$ dimensions. A linear combination of tensor product operators $p^{|{\bf w}|}$, with  $|{\bf w}|=w_1+\dots+w_d$, see~\cite{Xiu05,JUDD2014}, is given by:
\begin{eqnarray}
\label{eqn:SmolyakGrid}
 \widetilde g(x_{1},\dots, x_{d})= \sum_{\max(d,\mu+1)\leq\mid {\bf w}\mid\leq d+\mu}(-1)^{d+\mu-|\mathbf{w}|}{{d-1}\choose{d+\mu-|\mathbf{w}|}}p^{|\bf w|}(x_{1},\dots,x_{d}),
\end{eqnarray}
with $\mu$ the sparse grid level often referred to as the {\it deepness} or {\it density} parameter, and where the tensor product operator is defined by:
\begin{equation}
p^{|{\bf w}|}(x_{1},\dots,x_{d})=\sum_{|{\bf w}|=w_1+\dots+w_d}p^{w_1,w_2,\dots,w_d}(x_{1},\dots,x_{d}),
\end{equation}
with
\begin{eqnarray}
p^{w_1,w_2,\dots,w_d}(x_{1},\dots,x_{d})=\sum_{j_1=1}^{m(w_1)}\dots\sum_{j_d=1}^{m(w_d)}b_{j_1,\dots,j_d}\psi_{j_1}(x_1)\cdots\psi_{j_d}(x_d),
\end{eqnarray}
where $m(j_k)=2^{j_k}+1$ for $j_k\geq 2$ and $m(1)\equiv 1$ representing the number of basis functions in dimension $j_k$. Functions $\psi_{j_k}(x_k)$ are the $j_k$'th unidimensional basis functions and  $b_{j_1,\dots,j_d}$ corresponds to polynomial coefficients, where $x_j\in[-1,1]^d$ are the collocation points for dimension $j$. These coefficients are constructed in such a way that the approximating polynomial $\widetilde g(\cdot)$ exactly matches the original function $g(\cdot)$ in all the points of the constructed grid.

As noted in~\cite{JUDD2014}, the classical Smolyak representation in~(\ref{eqn:SmolyakGrid}) involves nested sets, i.e., many elements of the summation contain repeated elements that increase in dimension $d$ and level $\mu$. To further improve the computational efficiency, an alternative formulation, based on disjoint sets, has been proposed~\cite{JUDD2014}:
\begin{eqnarray}
\label{eqn:SmolyakGrid2}
 \widetilde g(x_{1},\dots, x_{d})= \sum_{d\leq\mid {\bf w}\mid\leq d+\mu}q^{|\bf w|}(x_{1},\dots,x_{d}),
\end{eqnarray}
with a tensor product operator:
\begin{equation}
q^{|{\bf w}|}(x_{1},\dots,x_{d})=\sum_{|{\bf w}|=w_1+\dots+w_d}q^{w_1,w_2,\dots,w_d}(x_{1},\dots,x_{d}),
\end{equation}
and where
\begin{eqnarray}
q^{w_1,w_2,\dots,w_d}(x_{1},\dots,x_{d})=\sum_{j_1=m(w_1-1)+1}^{m(w_1)}\dots\sum_{j_d=m(w_d-1)+1}^{m(w_d)}b_{j_1,\dots,j_d}\psi_{j_1}(x_1)\cdots\psi_{j_d}(x_d),
\end{eqnarray}
with $\psi_{j_1}(x_1)\cdots\psi_{j_d}(x_d)$ representing a product of unidimensional basis functions and other settings are as for representation in~(\ref{eqn:SmolyakGrid}). The benefit of the presentation above is that there are no repeated terms across the different functions $p^{w_1,w_2,\dots,w_d}(x_{1},\dots,x_{d})$. Coefficients $b_{j_1,\dots,j_d}$ must be constructed so that the approximating polynomial $\widetilde{g}(\cdot)$ exactly matches the true function $g(\cdot)$ at the collocation points. As discussed in~\cite{JUDD2014}, there are essentially two  choices for the computation of the coefficients $b_{j_1,\dots,j_d}$, analytically involving nested sums or numerically by a numerical solution of a Lagrange interpolation problem given by solving the following algebraic system of equations:
\begin{equation}
\left[\begin{array}{c}
g(x_1)\\g(x_2)\\\dots\\g(x_d)\end{array}\right]=\left[\begin{array}{c}
\widetilde g(x_1)\\\widetilde g(x_2)\\\dots\\\widetilde g(x_d)\end{array}\right]=\left[\begin{array}{ccc}
\psi_1(x_1)&\dots&\psi_d(x_1)\\
\psi_1(x_2)&\dots&\psi_d(x_d)\\
\dots&\ddots&\dots\\\
\psi_1(x_d)&\dots&\psi_d(x_d)
\end{array}\right]\cdot\left[\begin{array}{c}
b_1\\b_2\\\dots\\b_d\end{array}\right],
\end{equation}
where $\psi_k:[-1,1]^d\rightarrow \R$ for $k=1,\dots,d$ are $d-$dimensional basis functions. The representation above is very flexible as it allows for different choices of collocation points $x_1,\dots,x_d$ and basis functions $\psi_k.$
The sparse grid construction is based on a unidimensional basis function.
A popular choice for the construction of these basis functions is found in the Chebyshev polynomials~\cite{Judd1998} and their extrema~\footnote{The article relies on the Smolyak grid, where the computed interpolation coefficients are used on a universal Lagrange interpolation technique. We use families of orthogonal basis functions to ensure the numerical stability of a solution to the Lagrange inverse problem. Although we use the Chebyshev polynomials in the article, it is not a limitation.}, also known as the Clenshaw-Curtis points~\cite{ClenshawCurtis1960},  defined by a recursive relation:
\begin{eqnarray*}
T_n(x)=2xT_{n-1}(x)-T_{n-2}(x),\;\;\;n\geq2,
\end{eqnarray*}
where $T_0(x)=1$, and $T_1(x)=x$. The Chebyshev polynomial of degree $N-1$ has $N$ extrema given by: $x_i=-\cos(\pi(i-1)/N-1)$ for $i=1,\dots,N.$ The approximation error of the Chebyshev is known to be polynomial/exponentially
convergent under the supreme norm for Lipschitz/analytic functions\cite{HighDimensionalUncertainty}.

The sparse tensor product grids for high dimensions are built upon Clenshaw-Curtis abscissas as they are particularly efficient (since the resulting sparse grids are nested). This hierarchical sampling property allows for reusing the samples when increasing the order when a more accurate model's response is required. In the original literature, we can find works that report the higher accuracy of Clenshaw-Curtis points than of the corresponding number of Gauss quadrature points~\cite{Tref08}.

An illustration of the collocation points for a 2D and 3D case is presented in Figures~\ref{fig:2d} and~\ref{fig:3d}. As we can see, the distribution of the points strongly depends on the level $\mu$. Under the sparse grid method, the number of grid points does not grow exponentially as it does for the tensor product but grows polynomially with dimension $d$.
In a later section, we will investigate its impact on the quality of the results~\footnote{Although the formulation may seem somewhat involved, open-source computed code libraries exist. An open-source library for efficient Smolyak grid construction can be found, for example, at \url{https://github.com/EconForge/Smolyak}.}.

\begin{figure}[h!]
  \centering
    \includegraphics[width=0.325\textwidth]{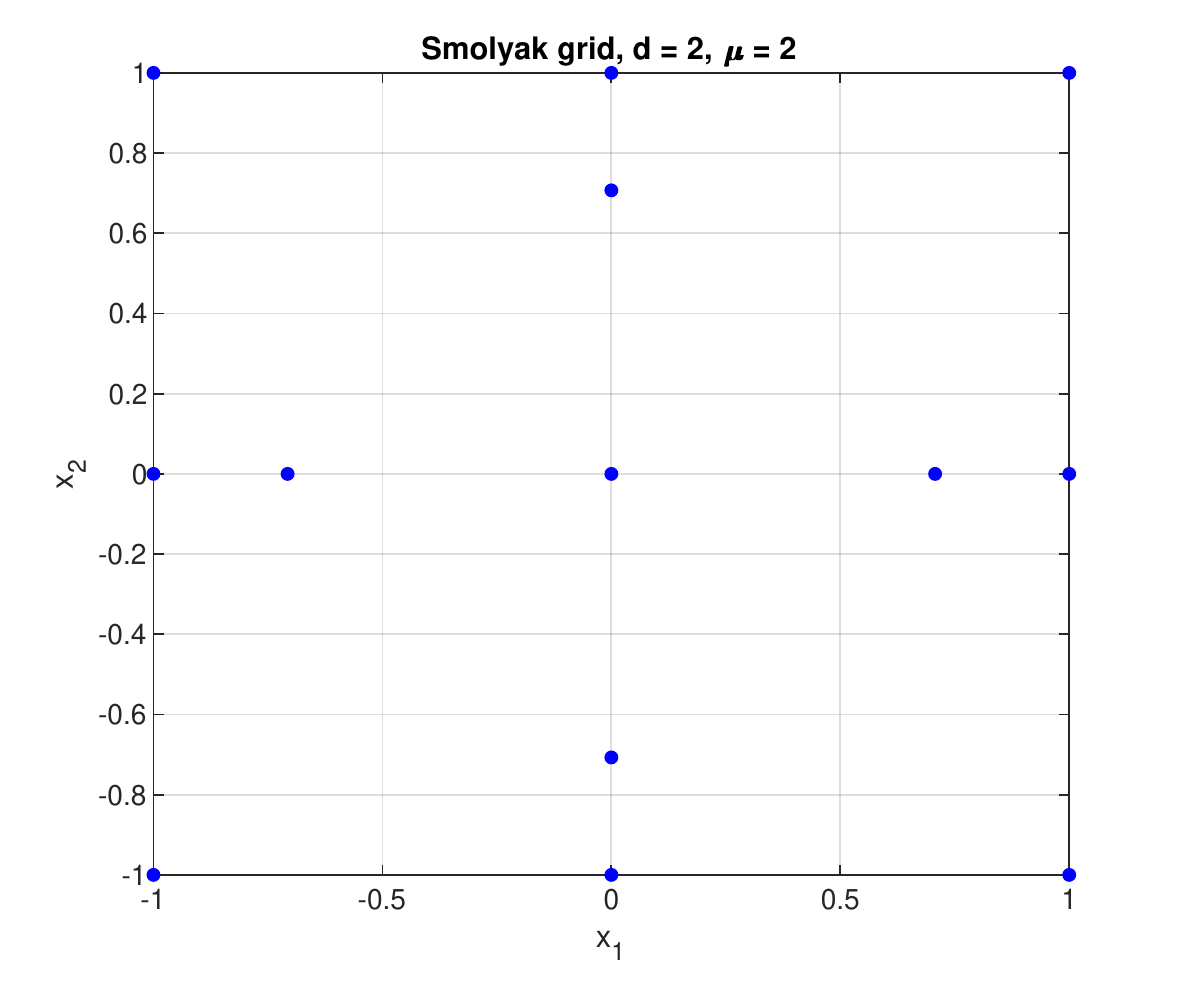}
    \includegraphics[width=0.325\textwidth]{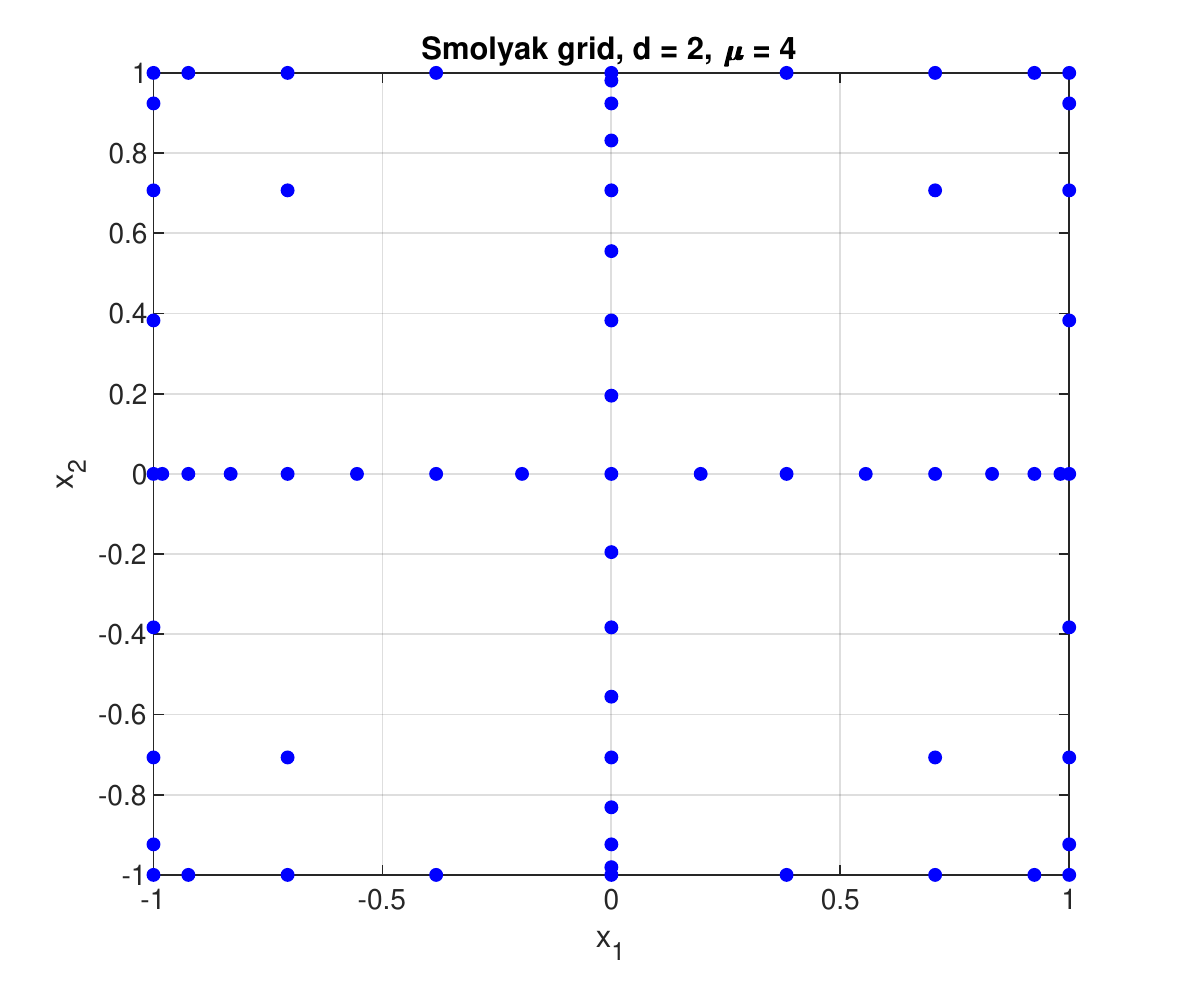}
    \includegraphics[width=0.325\textwidth]{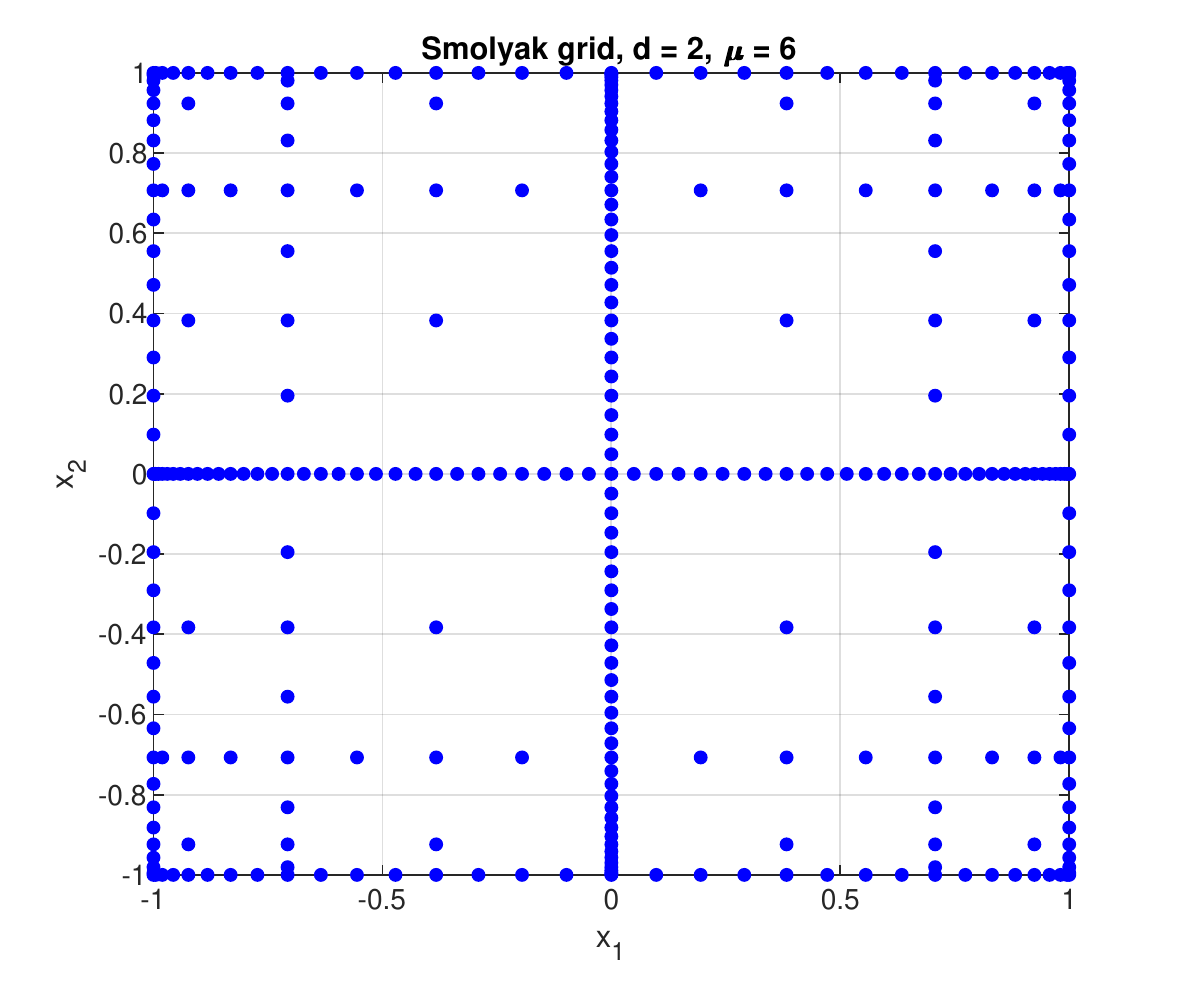}
      \caption{Collocation points for the $d=2$ case with varying level parameter $\mu=2,4,6$}
      \label{fig:2d}
\end{figure}

\begin{figure}[h!]
  \centering
    \includegraphics[width=0.325\textwidth]{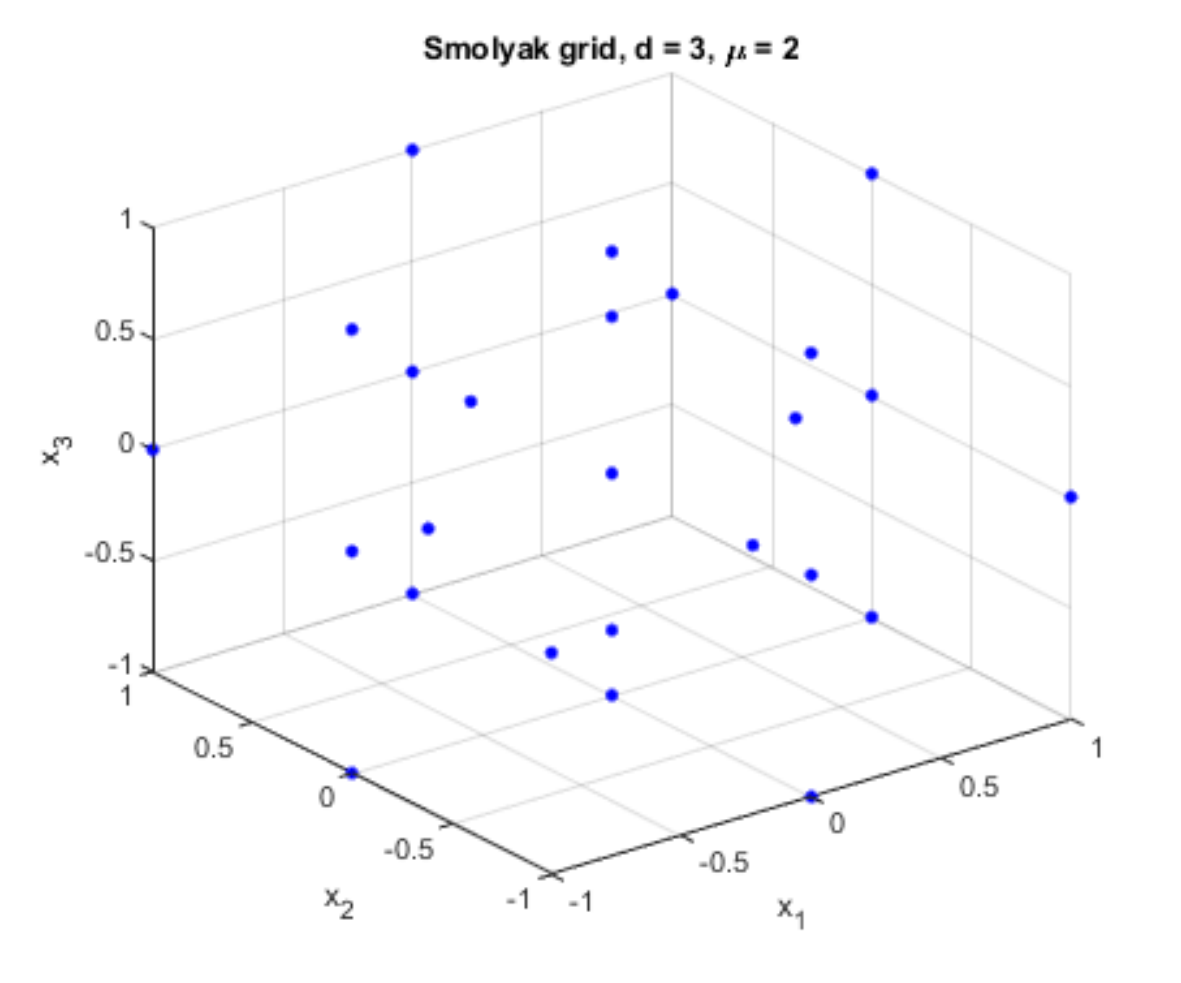}
    \includegraphics[width=0.325\textwidth]{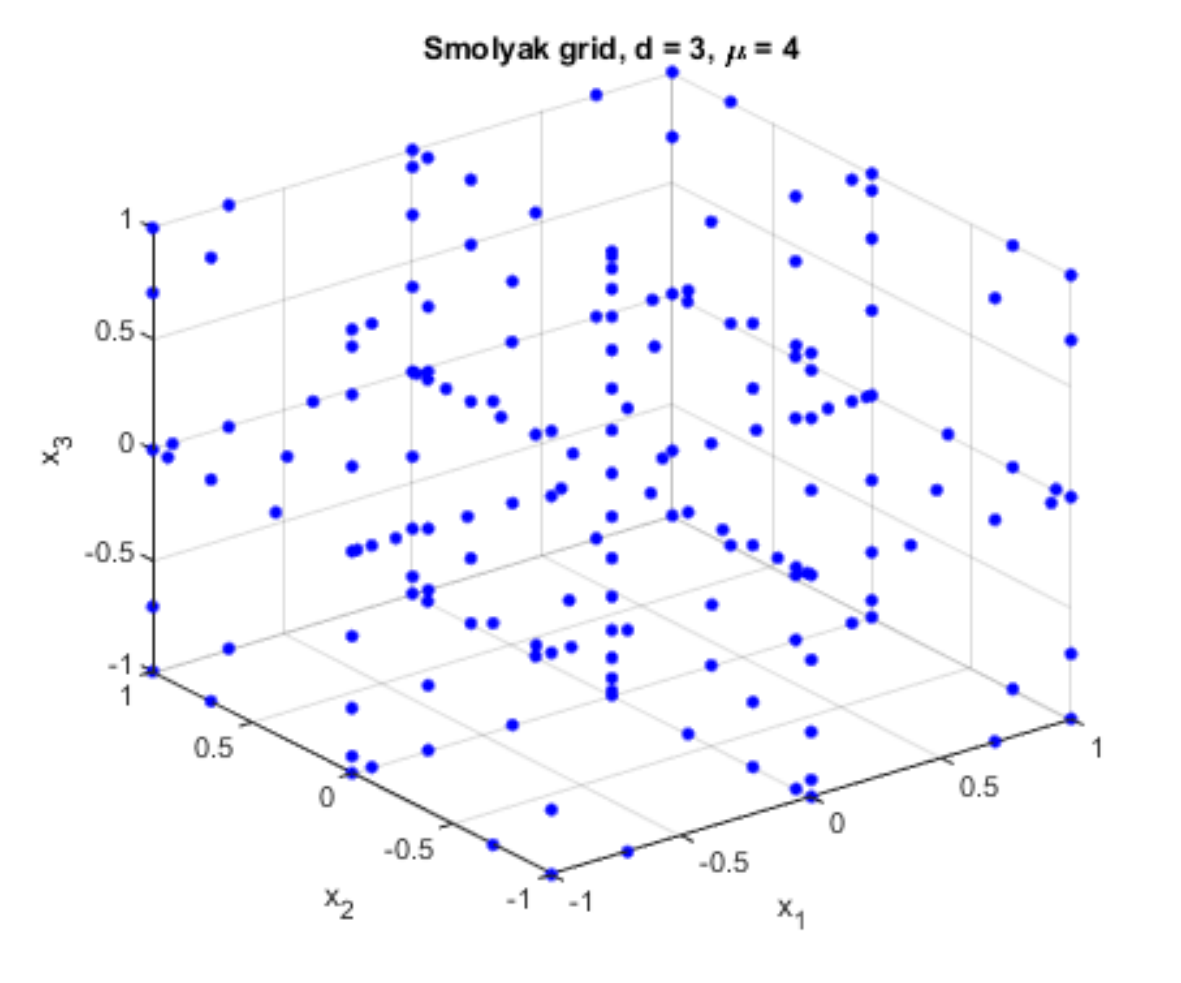}
    \includegraphics[width=0.325\textwidth]{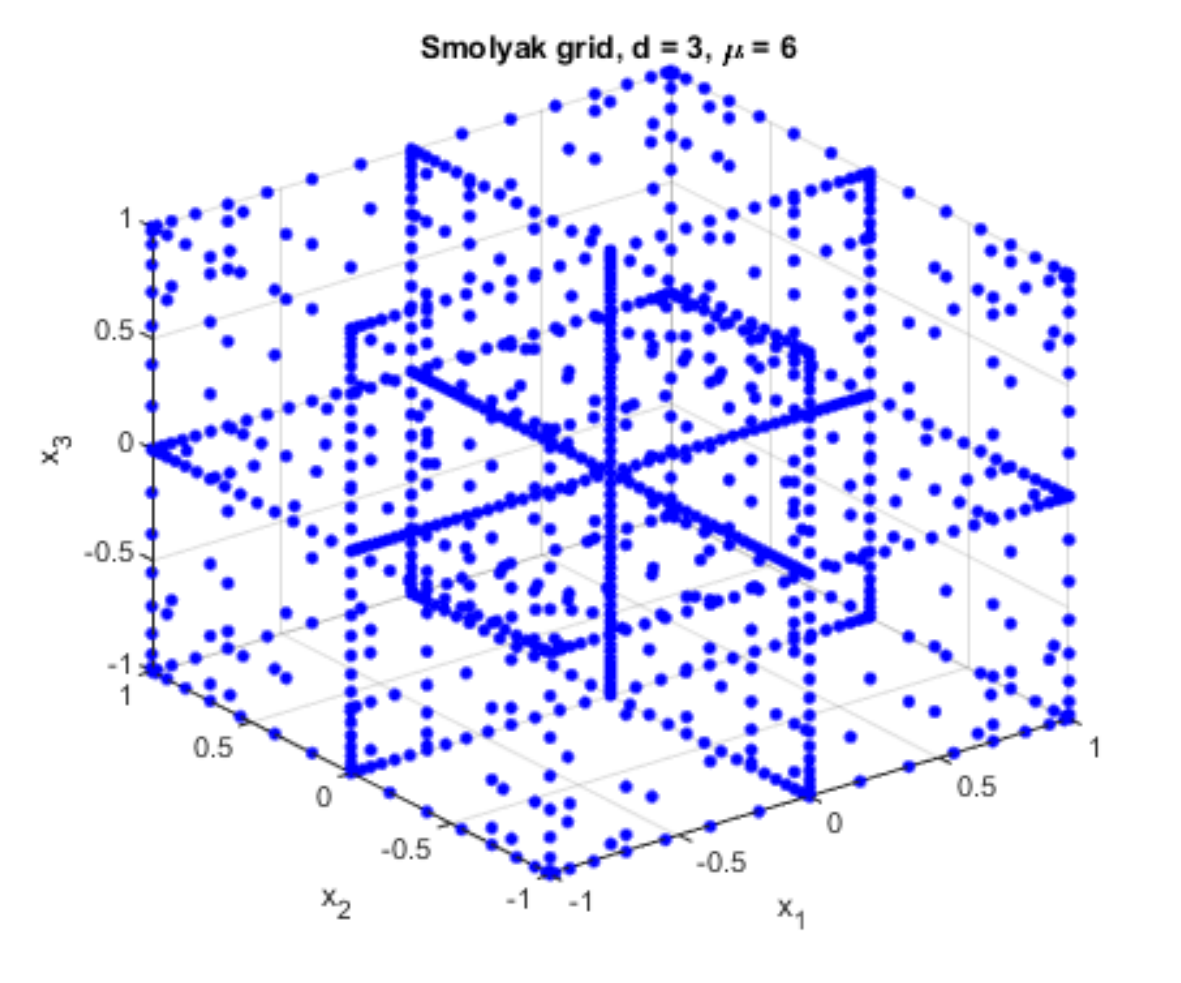}
      \caption{Collocation points for the $d=3$ case with varying level parameter $\mu=2,4,6.$}
      \label{fig:3d}
\end{figure}

\begin{rem}[High and Low Dimension Methods]
The main difference between the low- and high-dimensional methods is that in the low-dimensional case, the collocation points may be chosen based on the underlying risk factors, i.e., the SC method determines the optimal quadrature points based on the distribution, while in a high-dimensional case the collocation points are solely determined based on the sparse grid algorithm. Once the grid points in the $d-$dimensional space $[-1,1]^d$ are determined, the grid is appropriately scaled to match the domain of the risk factors ${\bf X}(t).$ Details on the scaling will be given in Section~\ref{sec:domainScaling}.
\end{rem}

In Table~\ref{Tab:diemsionality}, an overview of the required number of portfolio evaluations depending on dimensionality, $d$, and grid level, $\mu$, is presented. The results are very encouraging. For example, in the 2D case, we only need between 5 (for $\mu=1$) and 29 (for $\mu=3$) portfolio evaluations while in, for example, the 5D case, we only require 11 or 241 evaluations for $\mu=1$ and $\mu=3,$ respectively. Considering that the ``industrial'' practice for xVA is to run between 4,000 to 25,000 scenarios, the method makes a significant improvement, especially for a portfolio that does not depend on a very  high number of risk factors. It is desirable to keep the level parameter, $\mu$, as low as possible from the computational perspective. As it will be presented, the SC model with Smolyak's sparse grid performs accurately in estimating EEs and PFEs, already for $\mu = 2$ or $\mu = 3$.

\begin{table}[htb!]
\centering\footnotesize
\caption{\footnotesize The required number of portfolio evaluations per exposure date $T_k$, depending on the approximating algorithm, model configuration, tensor product and sparse grid: comparison of grid points as a function of dimension. $n_i$ indicates the number of collocation points in the $i$'th dimension and $d$ the number of risk factors that need to be simulated for exposure computation. $\mu$ represents the deepness of the grid.}
\begin{tabular}{c|c|c|c||c|c|c|c}
$d$&tensor product&tensor product&tensor product& \multicolumn{4}{c}{Smolyak grid}\\
&for $n_i=3$&for $n_i=4$&$n_i=5$&$\mu = 1$&$\mu=2$&$\mu=3$&$\mu=4$\\\hline\hline
1&3&4&5&3&5&9&-\\
2&9&16&25&5&13&29&65\\
3&27&64&125&7&25&69&177\\
4&81&256&625&9&41&137&401\\
5&243&1,024&3,125&11&61&241&801\\
6&729&4,096&15,625&13&85&389&1,457\\
7&2,187&16,384&78,125&15&113&589&2,465\\
8&6,561&65,536&390,625&17&145&849&3,937\\
\end{tabular}
\label{Tab:diemsionality}
\end{table}

As a rule of thumb, we distinguish two variants for determining the collocation points: with low dimensionality, $d\leq 2$, the collocation points will be established based on the simulated risk factor ${\bf X}(t)$ (standard collocation method). In the high-dimensional case, $d>2$, these points are established based on the sparse grid method of Smolyak.

\begin{rem}
Table~\ref{Tab:diemsionality} illustrates the gain in the portfolio evaluations; however, it is also worth mentioning that the evaluation of the approximating function $\widetilde{g}$, although swift, is not cost-free, i.e., it requires the evaluation of the interpolating function for all the simulated paths. This, however, is independent of the size of the portfolio considered.
\end{rem}


\subsection{Construction of the Approximating Portfolio}
\label{sec:generalConstruction}
Given the method described in the previous section, we specify the approximating portfolio construction. Every exposure computation relies on stochastic risk factors that need to be simulated, typically using a big time-step Monte Carlo simulation. In the multi-dimensional case, the risk factor ${\bf X}(t)$ is described by the SDEs:
\begin{eqnarray}
\label{SDE}
\d {\bf X}(t)=\alpha(t,{\bf X}(t))\dt + \sigma(t,{\bf X}(t))\d{\bf W}(t),
\end{eqnarray}
with $\alpha(t,{\bf X}(t))$ being a drift, $\sigma(t,{\bf X}(t))$ is the volatility and ${\bf W}(t)$ represents multi-dimensional correlated Brownian motion.  From the computational perspective, this step of the exposure simulation is just a fraction of the total portfolio evaluation. Discussion on efficient simulation of multidimensional SDEs can be found in~\cite{OosterleeGrzelakBook}, for example.

Once we have simulated the risk factors, the next step is to determine the approximation of the portfolio, i.e.: the value of a portfolio, at any exposure date $T_k$, which is approximated as:
\begin{eqnarray}
\label{eqn:proxySC}
V(T_k,{\bf X}(T_k))&\approx& \widetilde g\big(\{V\}_{i_1,\dots,i_d},{\bf X}(T_k)\big),
\end{eqnarray}
where $V(T_k,{\bf X}(T_k))$ stands for the original portfolio and  $\widetilde g\big(\{V\}_{i_1,\dots,i_d},{\bf X}(T_k)\big)$ is the approximation of the portfolio based on the portfolio values evaluated only at the collocation points $\{V\}_{i_1,\dots,i_d}:=V(t,\{{\bf X}\}_{i_1,\dots,i_d})$ with $\{{\bf X}\}_{i_1,\dots,i_d}=\{x_{1,1},x_{1,2},\dots x_{2,1},x_{2,2},\dots\}$, and where $x_{i,j}$ is the $j$'th collocation point for the $i$'th risk factor.

Note that the approach presented above slightly differs from the original SCMC method in~\cite{grzelak2015stochastic}, i.e., a direct application of the SCMC model would require the construction of the approximating polynomial based on quantiles of an {\it expensive} random variable, which in the current setting would be represented by the portfolio priced at the exposure date, $T_k$. Such quantiles are not available as they would require evaluation of the portfolio for all Monte Carlo paths- which we wish to avoid. Instead, we build an approximating grid based solely on the underlying risk factors.

Finally, once the approximating function $\widetilde g\big(\{V\}_{i_1,\dots,i_d},{\bf X}(T_k)\big)$ is built, we evaluate it for all  Monte Carlo paths defined in~(\ref{SDE}). This function evaluation does not require pricing of a portfolio, so the computational cost is low.


In Figure~\ref{fig:pathsValue}, an illustrative example is presented. In the experiment, we consider a portfolio comprising  20 interest rate swaps and a stock. Thus, a case  with two risk factors, ${\bf X}(T_k)=[r(T_k),S(T_k)]^\T$ observed at a time $T_k$. In the LHS figure marked with black dots, we see the set of collocation points, $\{V\}_{i_1,i_2}$, that is determined based on the Smolyak sparse grid. Based on these grid points, an interpolated surface of the portfolio values $\widetilde g(\{V\}_{i_1,i_2})$ is built with the Lagrange interpolation. The RHS figure illustrates the evaluation of the approximating portfolio using the simulated risk factors, $\widetilde g\big(\{V\}_{i_1,i_2},{\bf X}(T_k)\big)$. The procedure described above needs to be repeated for all exposure dates: $T_k$, $k=1,\dots,N_T,$ in Equation~(\ref{eqn:proxySC}). Once function $\widetilde g(\cdot)$ is determined for every $T_k$ and evaluated for all Monte Carlo scenarios, it can be used to calculate risk measures as exposure, potential future exposure, etc.



\begin{figure}[h!]
  \centering
    \includegraphics[width=0.49\textwidth]{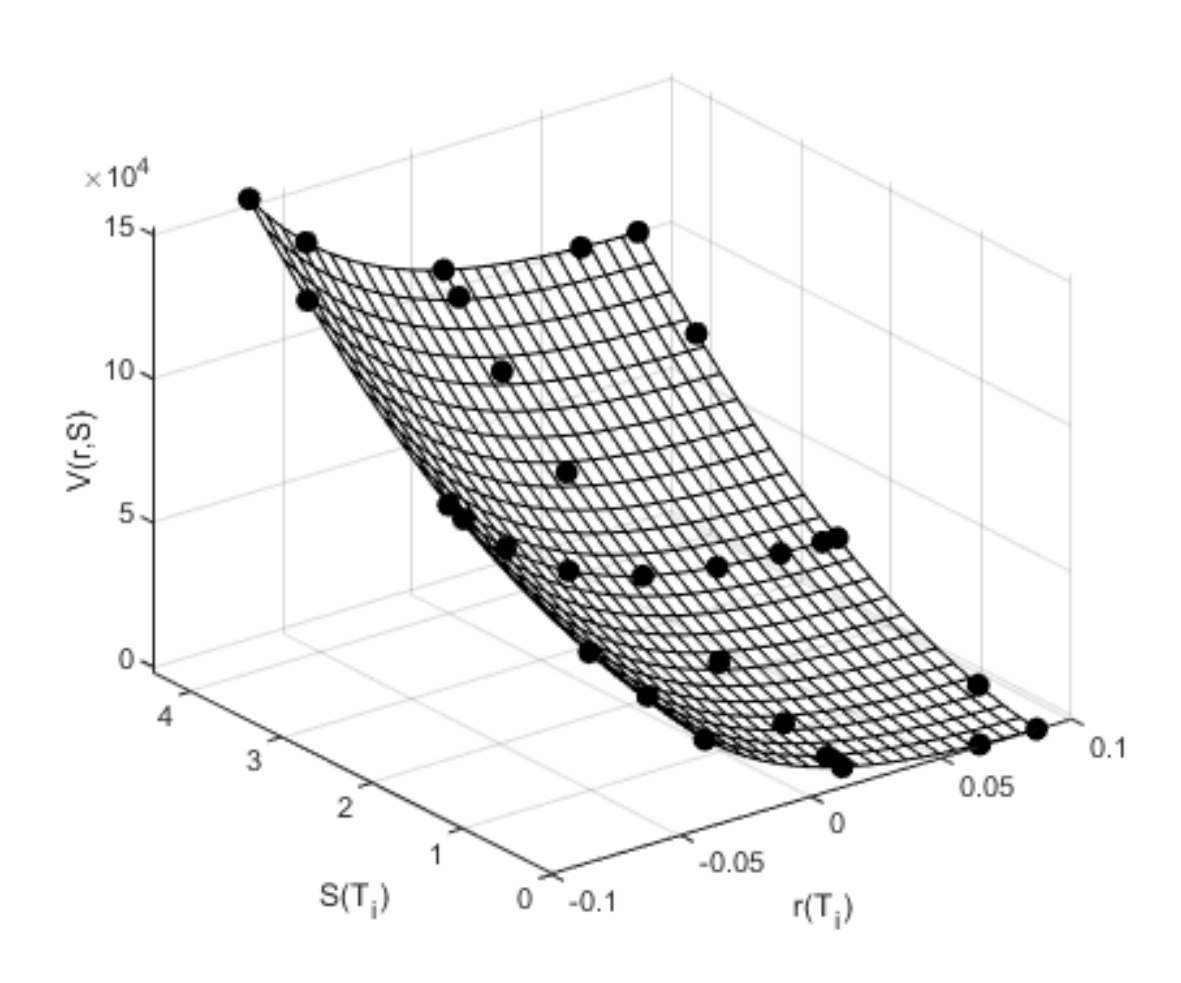}
    \includegraphics[width=0.49\textwidth]{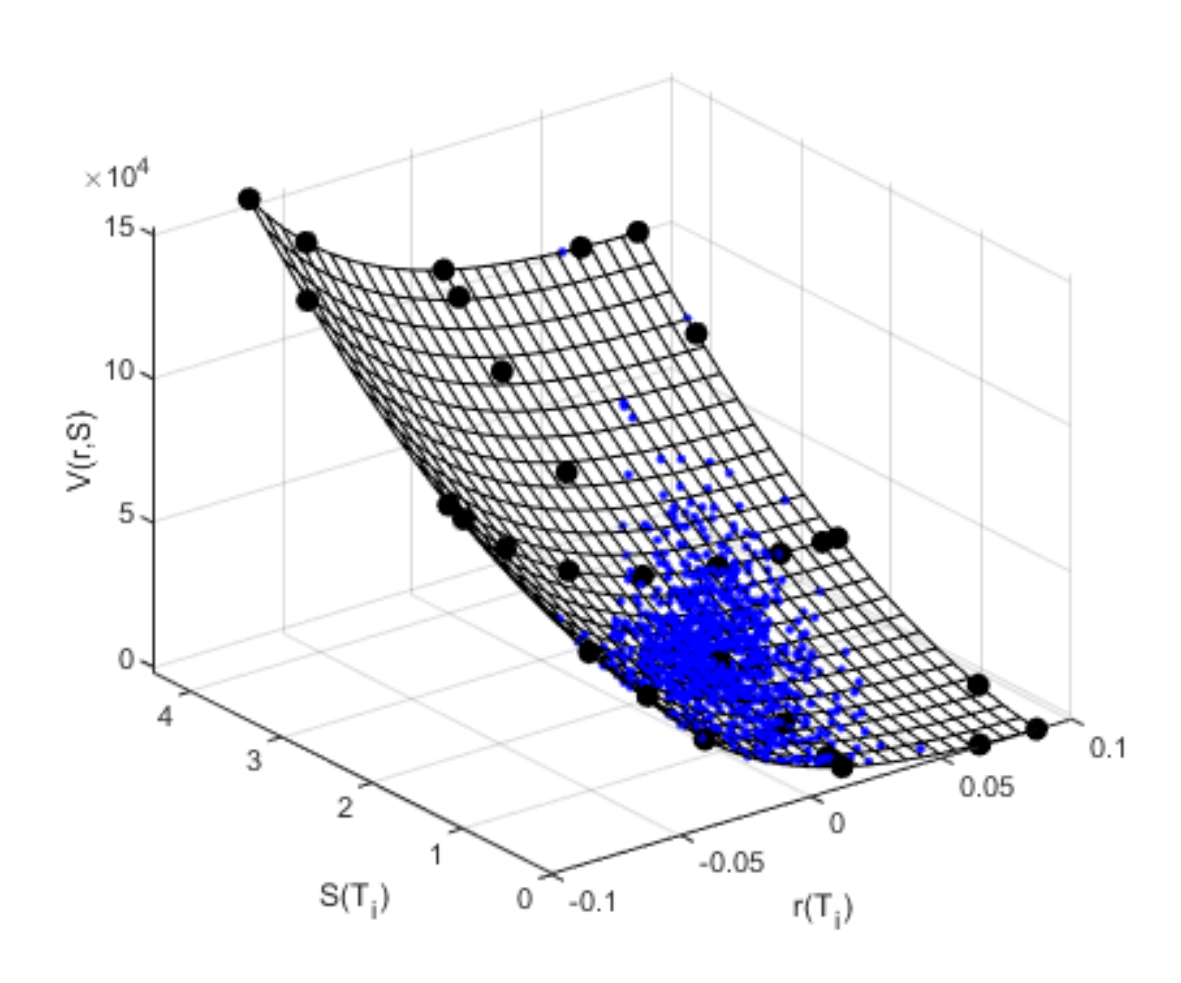}
      \caption{The construction of the sparse grid for portfolio evaluation. LHS: black dots indicate the collocation values and the grid constructed using Lagrange interpolation. RHS: blue dots illustrate the portfolio values based on the estimated grid for simulated risk factors ${\bf X}=[r(t),S(t)]$. }
      \label{fig:pathsValue}
\end{figure}

\begin{rem}[Correlation]
Since the proposed algorithm focuses on approximating the underlying portfolio, the method does not affect the structure of the underlying stochastic processes, i.e., Monte Carlo paths stay the same; thus, the model correlation structure stays intact.
\end{rem}

A detailed discussion on the error of the approximation, convergence, and the computational cost will be discussed in Section~\ref{sec:Error}, while the complete simulation algorithm is described in~\ref{sec:appendix}.


\section{Practical Application and Numerical Experiments}
\label{sec:3}

This section provides a detailed description of using the SC method to approximate portfolios and, therefore, get exposure profiles efficiently. We start with a single currency portfolio comprising multiple interest rate swaps, thus a 1D case with a single risk factor. Later we will discuss portfolios depending on multiple risk factors, i.e., containing derivatives like interest rate swaps in different currencies. We also focus on multi-factor models, like the Hull-White 2 factor model, and we assess the performance when valuing portfolios containing swaptions.~\footnote{In the numerical experiment, the pricing has been performed using Python 3.9 on Intel(R) Core (TM) i5-6800K CPU @ 3.6Ghz. }

\subsection{Portfolio with Single Currency Interest Rate Products, 1D Case}
\label{sec:1D}
In this subsection, we analyze a portfolio that solely depends on a single risk factor; it can be, for example, a portfolio consisting only of EUR swaps~\footnote{Swaps are financial products that enable their holders to swap two sets of interest rate payments.}. As the industry-standard, we consider the 1D Hull-White (HW) model as the driver of the short rate process $r(t),$ with the dynamics given by:
\begin{eqnarray}
\label{eqn:HW}
\d r(t) = \lambda (\theta(t)-r(t))\dt + \eta \dW^\Q(t),
\end{eqnarray}
and parameters $\lambda$ and $\eta$, term structure function $\theta(t)$ and a Brownian motion under the risk-neutral measure $W^\Q(t).$ Function $\theta(t)$ is defined in terms of the zero-coupon bonds, $P_{mrkt}(0,t)$, quoted in the market and is given by:
\begin{eqnarray}
\label{eqn:theta}
\theta(t)= \frac{1}{\lambda}\frac{\partial}{\partial t}f^r(0,t)+f(0,t)+\frac{\eta^2}{2\lambda^2}\left(1-\e^{-2\lambda t}\right),
\end{eqnarray}
where the instantaneous forward rate is given by:
\begin{equation*}
f^r(0,t)=-\frac{\partial}{\partial t}\log P_{mrkt}(0,t).
\end{equation*}

The Hull-White process in~(\ref{eqn:HW}) allows for a large time step Monte Carlo simulation with the following {\it exact} formula:
\begin{equation}
\label{eqn:HWSteps}
r(t)=r(s)\e^{-\lambda (t-s)}+\lambda\int_s^t\theta(u)\e^{-\lambda(t-z)}\d z + \frac{\eta}{\sqrt{2\lambda}}\e^{-\lambda(t-s)}W^\Q(\e^{2\lambda(t-s)}-1),
\end{equation}
for any $t>s.$

The number of trades does not affect the method discussed, so we consider generic $\bar{M}$ interest rate (IR) swaps in this portfolio, i.e.:
\begin{eqnarray}
\label{eqn:portfolioIR}
V(t,r(t))=\sum_{i=1}^{\bar M}V_i(t,r(t)),
\end{eqnarray}
with $V_i(t,r(t))$ being the $i$'th IR swap evaluated at time $t$.
The benefit of using the HW model~(\ref{eqn:HW}) is that at any future time $t>t_0,$ the value of the $i$'th IR swap with payments at $\{T_{j+1},T_{j+2},\dots,T_{\bar{M_i}}\}$ is known analytically and is given by:
\begin{eqnarray}\label{swapSingleCCy}V_i(t,r(t))=N_i\sum_{k=j+1}^{\bar M_i}\tau_kP(t,T_k)\left(\ell_k(t)-K_i\right)\1_{t\leq T_k},\;\;\;\text{with}\;\;\; \ell_k(t)=\frac{1}{\tau_k}\left(\frac{P(t,T_{k-1})}{P(t,T_k)}-1\right),\end{eqnarray}
where $N_i$ stands for the notional amount, $K_i$ is the fixed rate and where the ZCB, $P(t,T)$, is defined as~\cite{OosterleeGrzelakBook}:
\begin{equation}
\label{eqn:HW:ZCB}
P(t,T)=\E^\Q\left[\e^{-\int_t^Tr(z)\d z}\Big|\F(t)\right]=\e^{A(t,T)+B(t,T)r(t)},
\end{equation}
with,
\begin{eqnarray}
\label{eqn:B}
B(t,T)&=&\frac1\lambda\left(\e^{-\lambda(T-t)}-1\right),\\
\label{eqn:A}
A(t,T)&=&\lambda\int_t^T\theta(z)
B(z,T)\d z+\frac{\eta^2}{4\lambda^3}\left[\e^{-2\lambda
(T-t)}\left(4\e^{\lambda
(T-t)}-1\right)-3+2\lambda(T-t)\right].
\end{eqnarray}
Because of the affinity of the Hull-White model, the evaluation of a swap at any time $t$ only depends on the interest rate, $r(t)$, which is highly beneficial.

Given the $\text{xVA}(t_0)$ discretization in~(\ref{eqnxVA}), the portfolio in~(\ref{eqn:portfolioIR}) needs to be simulated at every time $T_1,\dots,T_{N_T}$. In a standard approach to get the exposures at each exposure date, $T_k$, portfolio $V(T_k,r(T_k))$ needs to be evaluated at each path of the stochastic process $r(T_i)$. Typically, to get convergence of the solution, many simulated Monte Carlo paths are used. Often the number of paths is limited by the hardware, especially for large portfolios.

We define the approximating function $\widetilde g(V(T_k,\{r_{j}(T_k)\}_{j=1}^{n_1}),r(T_k)),$ which will only require portfolio evaluation at the $n_1$ collocation points, \begin{eqnarray}
\label{eqn:collR}\{r_{j}(T_k)\}_{j=1}^{n_1}:=\{r_{j}(T_k),\dots,r_{n_1}(T_k)\}.
\end{eqnarray}

One needs to decide which technique for the computation of the SC points will be used. 
Determining collocation points for a Gaussian process is a straight-forward exercise, i.e., the collocation points, $r_{i,j}$, are given by:
$$r_j(T_k) = \E[r(T_k)]+\sqrt{\Var[r(T_k)]}x_j^{\mathcal{N}(0,1)},\;\;\;j=1,\dots,n_1,$$
where $x_j^{\mathcal{N}(0,1)}$'s are the collocation points from a
standard normal and where the mean and the variance are given by,
\[\E\left[r(t)|\F(t_0)\right]=r_0\e^{-\lambda
t}+\lambda\int_0^t{\theta(z)\e^{-\lambda(t-z)} \d z},\;\;\;\text{and}\;\;\;
\var\left[r(t)|\F(t_0)\right]=\frac{\eta^2}{2\lambda}\left(
1-\e^{-2\lambda t}\right).\]
The collocation points for a standard normal,
$x_j^{\mathcal{N}(0,1)}$, are known analytically and are tabulated
in~\cite{grzelak2015stochastic}. These points are also related to
the abscissas of the Gauss-Hermite quadrature. The difference between
the Gauss-Hermite abscissas $\{x_j^H\}_{j=1}^{n_1}$ and the collocation
points $\{x_j^{\mathcal{N}(0,1)}\}_{j=1}^{n_1}$, with
$X\sim\mathcal{N}(0,1)$, is the weight function, i.e., we use the
normal distribution while Gauss-Hermite quadrature is based on the
function $\e^{-x^2}$. The relation between the abscissas is given by $x_j^H=x_j^{\mathcal{N}(0,1)}/\sqrt{2},$ so once we have
the abscissas from the Gauss-Hermite quadrature, they can be used in
the collocation method.

Once we employ the standard collocation method, the approximating portfolio function could be generated using Lagrange polynomial:
\begin{equation}
\label{eqn:LagrangePortfoli}
\widetilde g(V(T_k,\{r_{j}(T_k)\}_{j=1}^{n_1}),r(T_k))=\sum_{j=1}^{n_1}V(T_k,r_{j}(T_k))\psi_j(r(T_k)),
\end{equation}
where $r_{j}(T_k),r_{l}(T_k)$ are the collocation points defined in~(\ref{eqn:collR}),  $V(T_k,r_{j}(T_i))$ are the portfolio evaluations for a given expiry date, $T_k$, evaluated at the collocation points, $r_j(T_k)$, and where $\psi_j(r(T_k))$ is defined as in~(\ref{eqn:lagrange_1}).

Given the representation in~(\ref{eqn:LagrangePortfoli}), we can evaluate function $\widetilde g(\cdot)$ at all stochastic paths of the underlying process, $r(T_i).$ The Lagrange polynomial's particular choice is not obligatory; the reader can decide employing different interpolation routines. However, a benefit of Lagrange representation is that its monomial basis representation allows very efficient polynomial evaluation, even for a large number of samples.

In the numerical experiment, we consider discounted expected (positive) exposures and potential future exposures defined for the exposure date, $T_k$, specified as:
\begin{eqnarray*}
\text{EE}(t_0,T_k)&=&\E^\Q\left[\frac{M(t_0)}{M(T_k)}E^+(T_k,r(T_k))\big|\F(t_0)\right],\\
\text{PFE}(t_0,T_k) &=& \inf\{x\in\R:p\leq F_{E^+(T_k,r(T_k))}(x)\},
\end{eqnarray*}
with positive exposures, $E^+(T_k,r(T_k)),$ defined in~(\ref{eqn:exposure}) for which the approximating function $\widetilde g(\cdot),$ reads:
\begin{eqnarray*}
E^+(T_k,r(T_k))\approx \max(\widetilde g(V(T_k,\{r_{j}(T_k)\}_{j=1}^{n_1}),r(T_k)),0),
\end{eqnarray*}
and where $M(t)$ under the HW model represents the money-savings account defined by $\d M(t)=r(t)M(t)\dt.$


To illustrate the approximation quality, we consider two portfolios: one with a single IR swap with a maturity of 10 years and one with 25 different IR swaps with varying  maturities. In the experiment we consider 50 equally spaced exposure dates, i.e. $T_1=0.5,T_2=1,\dots,T_{50}=25$.

In Figure~\ref{fig:SingleMultiSwap} (first figure), the numerical results for the computer simulation are presented~\footnote{The reported numerical results were obtained with typical model parameters observed in 2020.}. The figure shows EE's and PFE's for different confidence levels, $p$. We consider only $n_1=3$ collocation points in the experiment, thus only 3 portfolio evaluations per exposure date.  The results are highly satisfying. Moreover, for 10k Monte Carlo paths and 50 exposure dates, we have reduced the number of portfolio evaluations from 500k to only 150. Thus, the number of portfolio evaluations is reduced by more than a factor of 3,000.

Figure~\ref{fig:SingleMultiSwap} (second figure) shows that in the second portfolio case, the quality in the tail for PFE's, represented with $p=0.99$, can be further improved. As illustrated in Figure~\ref{fig:SingleMultiSwap} (third figure), the error for a $99\%$ PFE quantile can be reduced to almost zero once the number of collocation points is increased to $n_1=4$. The increase of collocation points has also increased the number of portfolio evaluations from 150 to 200. Then, the reduction of the number of portfolio evaluations is 2500. Discussion on the error propagation depending on the number of the collocation points takes place in Section~\ref{sec:Error}.
\begin{figure}[h!]
  \centering
    \includegraphics[width=0.49\textwidth]{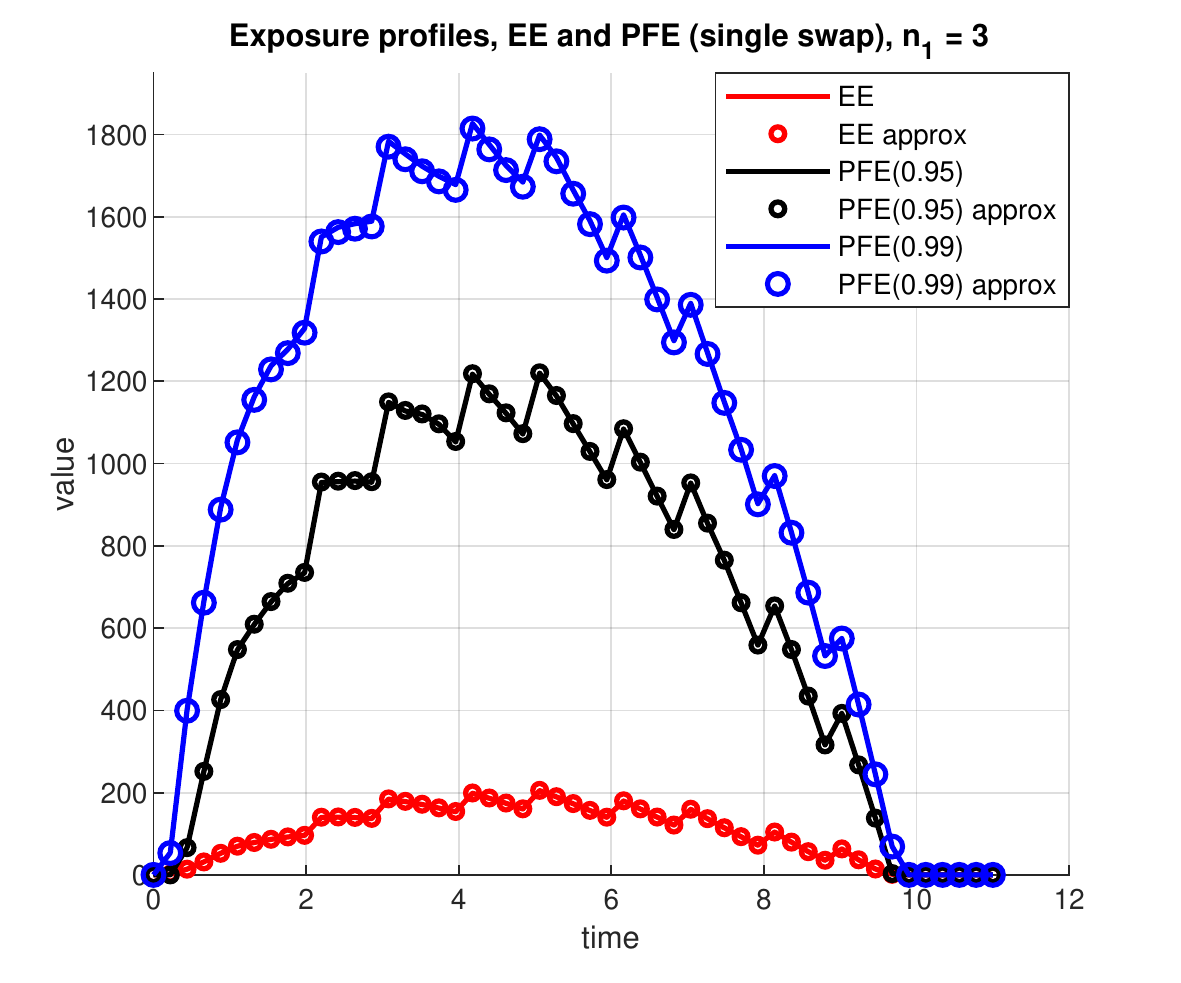}
    \includegraphics[width=0.49\textwidth]{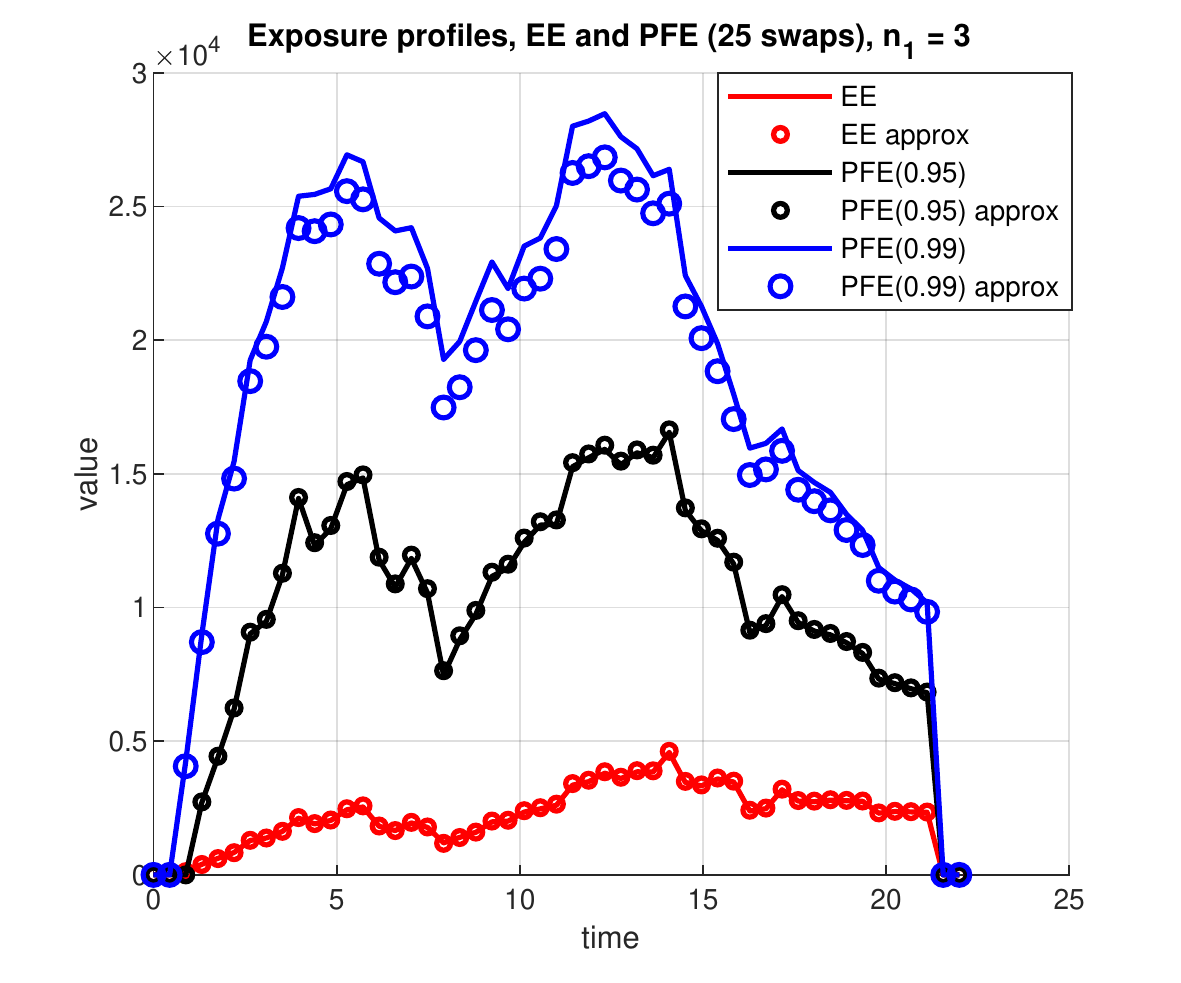}\\
    \includegraphics[width=0.49\textwidth]{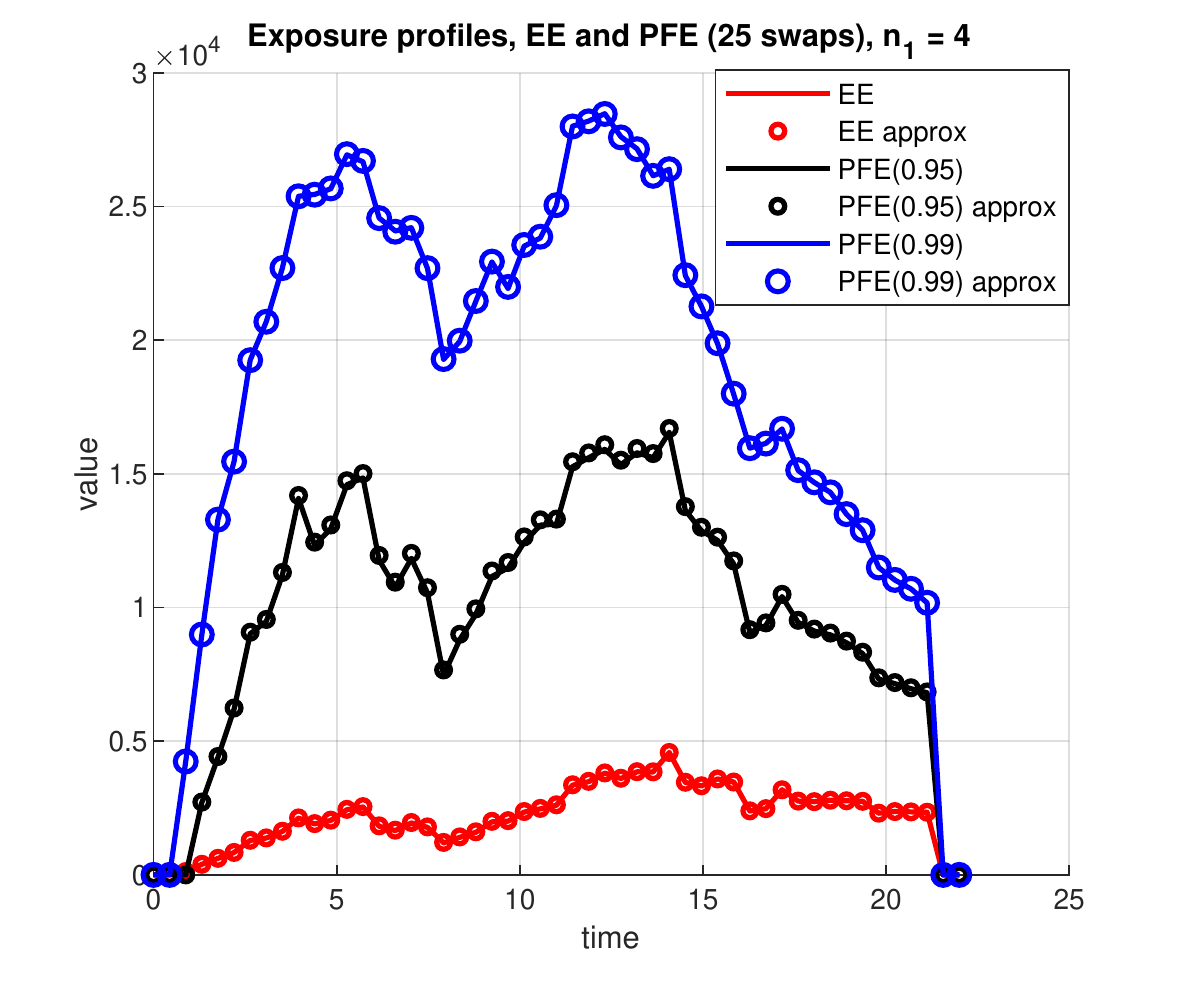}
      \caption{Portfolio with a single IR swap for $n_1=3$ (first figure). Portfolio with 25 swaps for $n_1=3$ (second figure) and portfolio with $n_1=4$ (third figure).}
      \label{fig:SingleMultiSwap}
\end{figure}

\subsection{Multi-D Case: Portfolios with Multiple Risk Factors}
\label{sec:multid}
Although the example in Section~\ref{sec:1D} is illustrative and confirms the high potential of the collocation method, portfolios typically involve multiple risk factors, for example, swaps in different currencies, foreign exchange forwards, stocks in different currencies, or other products.

Here, we consider a representative portfolio consisting of swaps in $d_c$ different currencies. This implies a problem of $d=2d_c+1$ dimensions: $d_c$ processes for foreign interest rates, $r_k(t)$, for $k=1,\dots,d_c $, $d_c$ processes for foreign exchange process that exchange a foreign amount of money to a base currency, $y_k(t)$, and 1 for an interest rate in a base currency, $r_b(t)$. This setting can be easily further extended with stochastic volatility or multi-factor processes for interest rates. However, we note that although advanced models for modeling foreign exchange rates with implied volatility smile and skew exist~\cite{GrzelakOosterlee:2011:CrossCurrency}, here, we consider the FX model to be based on a hybrid, the Black-Scholes Hull-White model~\cite{OosterleeGrzelakBook}.

The system of SDEs for the FX model, under the risk-neutral measure of a base currency, reads:
\begin{eqnarray}
\label{eqn:FXmulti}
\d y_k(t) &=&\left(r_b(t)-r_k(t)\right)y_k(t)\dt + \sigma_{y,k}y_k(t)\dW_{k}^{y}(t),
\end{eqnarray}
with $r_b$ representing the base currency~\ref{eqn:HW} and with $k=1,\dots, d_c$ indicating foreign currencies, where we consider a full matrix of correlations with the following correlation coefficients,
\begin{eqnarray*}
\dW_{k}^{y}(t)\dW^{b}(t)=\rho_{k,b}^y\dt,\;\;\;\dW_{k}^{y}(t)\dW_j(t)=\rho_{k,j}^y\dt,\;\;\;\dW^{b}(t)\dW_k(t)=\rho_{k}\dt,
\end{eqnarray*}
and where the convexity adjusted term structure for foreign short rate processes is given by $\hat\theta_k(t)=\theta_k(t)-\eta_k\sigma_{y,k}\rho_{k,k}^y$ with $\theta(t)$ defined in~(\ref{eqn:theta}). Although the short-rate process, $r_k(t),$ contains a convexity correction in the drift, the model belongs to the affine class of processes, and the zero-coupon bonds can be priced as in~(\ref{eqn:HW:ZCB}). Thus, all the linear interest rate products can be expressed as linear combinations of zero-coupon bonds.
The representation of the system in~(\ref{eqn:FXmulti}) allows for large time step Monte Carlo simulation with the following solution for each FX process:
\begin{eqnarray*}
y_k(t) &=&y_k(s)\exp\left({\int_s^t\left(r_b(z)-r_k(z)-\frac12\sigma_{y,k}^2\right)\dt + \sigma_{y,k} (W_{k}^{y}(t)-W_{k}^{y}(s))}\right),
\end{eqnarray*}
and large-time steps for $r_b(t)$ and $r_k(t)$, as presented in~(\ref{eqn:HWSteps}).

A cross-currency portfolio consisting of swaps in $d_c$ different currencies is then given by:
\begin{eqnarray}
\label{ref:porftolioMulti}
V(t,{\bf X}(t)) =\sum_{i=1}^{M_b}V_i^b(t,r_b(t)) + \sum_{k=1}^{d_c}y_k^b(t)\sum_{i=1}^{M_k} V_i^k(t,r_k(t)),
\end{eqnarray}
with the following state vector, ${\bf X}(t) = [r_b(t),y_1^b(t),\dots,y_{d_c}^b(t),r_1(t),\dots,r_{d_c}(t)]^\T$,
and where the first sum indicates the sum over all swaps under the base currency, the second term indicates the summation over all swaps under foreign currencies that are {\it exchanged} with $y_k^b(t)$ to the base currency. The value of the portfolio in~(\ref{ref:porftolioMulti}) is expressed in the base currency. In a realistic scenario involving multiple currencies, the complexity of the exposure computation has increased. A portfolio with $d_c$ foreign currencies would require a tensor grid of $d=2d_c+1$ dimensions. This, as presented in Table~\ref{Tab:diemsionality}, may lead to a higher number of portfolio evaluations than one would expect with the Monte Carlo simulation. We will apply the SC method with Smolyak's sparse grid algorithm to reduce the number of computations.

Following the procedure described in Section~\ref{sec:generalConstruction}, we approximate the portfolio, for each exposure date, $T_k$, with:
\begin{eqnarray}
\label{eqn:approxMulti1}
V(T_k,{\bf X}(T_k))\approx \widetilde g\big(\{V\}_{i_1,\dots,i_d},{\bf X}(T_k)\big),
\end{eqnarray}
where function $\widetilde g(\cdot)$ is built based on Smolyak's sparse grid method presented in Equation~(\ref{eqn:SmolyakGrid2}) and where $\{V\}_{i_1,\dots,i_d}$ are the portfolio values evaluated at the collocation points based on Smolyak's sparse grid.

To illustrate the performance of the method, we consider the base currency to be EUR and a portfolio consisting of many swaps in Euros (EUR), British Pounds (GBP), and Polish Z\l oty (PLN). Such a problem requires a $7-$dimensional system of SDEs in~(\ref{eqn:FXmulti}):
\begin{equation}
    \label{eqn:X}
{\bf X}(t) = [r_{\text{\euro}}(t),y_{\$}^{\text{\euro}}(t),y_{\text{\pounds}}^{\text{\euro}}(t),y_{\text{z\l}}^{\text{\euro}}(t),r_{\$}(t),r_{\text{\pounds}}(t),r_{\text{z\l}}(t)]^\T,\end{equation}
where each process is calibrated to individual market and is then used in the system's simulation of SDEs in~(\ref{eqn:FXmulti}). In the experiment, we consider the case where all the processes are correlated. The correlation coefficients can be estimated based on historical data or estimated from correlation products that are occasionally present in the OTC market.

In the experiment, we consider $N_T=75$ equally spaced exposure dates, i.e., $T_1,T_2,\dots,T_{N_T}$, and a Monte Carlo simulation with 25k stochastic paths. Details regarding model parameters used in the experiments are presented in~\ref{sec:appendix2}.

As previously, we compare the results from a full-blown Monte Carlo simulation against the approximation in~(\ref{eqn:approxMulti1}). The graphical representation of the numerical results for the expected positive exposures and PFEs for different significance levels is presented in Figure~\ref{fig:PortfolioSmolyak}. The results vary depending on the sparse grid parameter $\mu$, which defines the grid-level.

\begin{figure}[h!]
  \centering
    \includegraphics[width=0.49\textwidth]{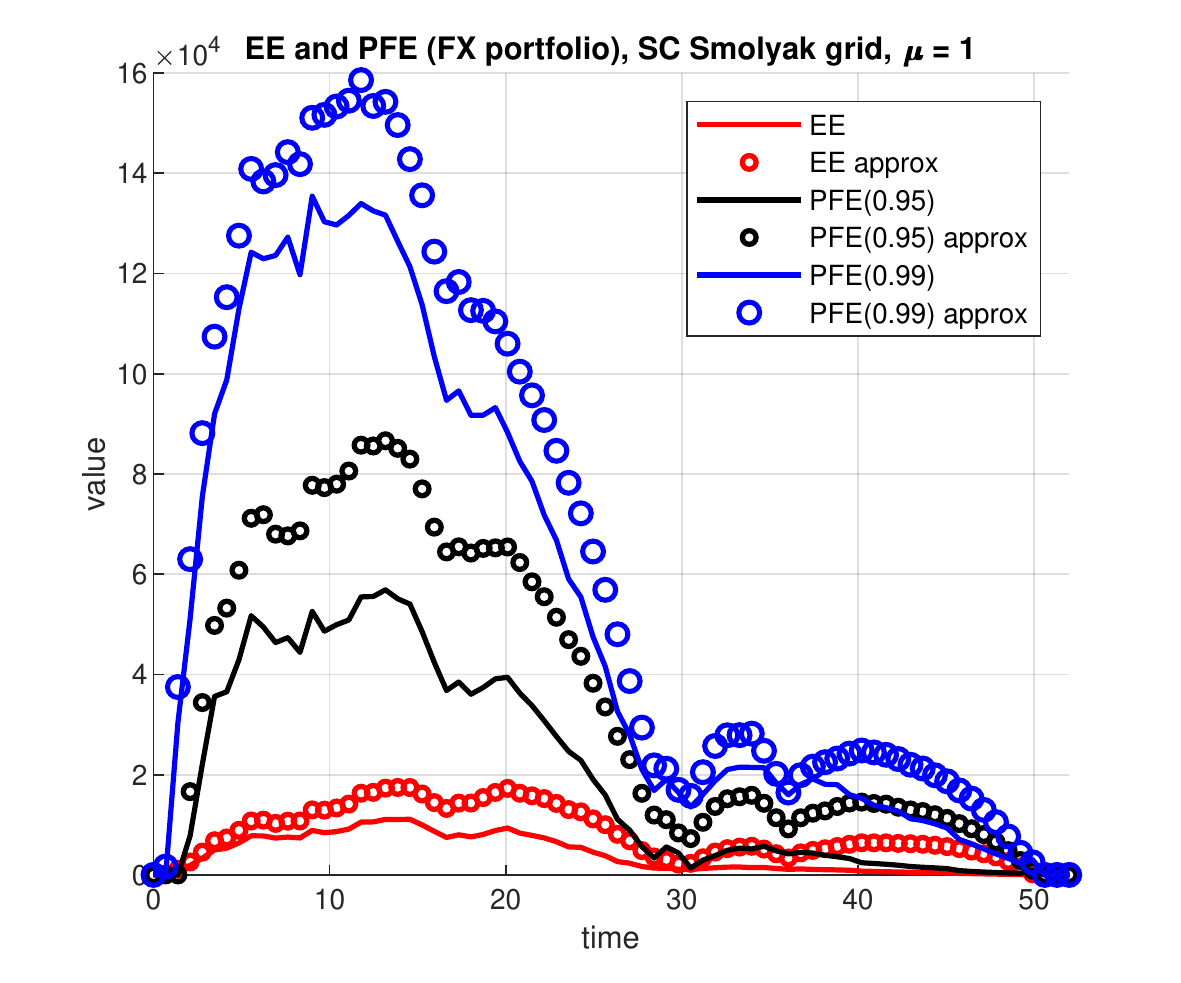}
    \includegraphics[width=0.49\textwidth]{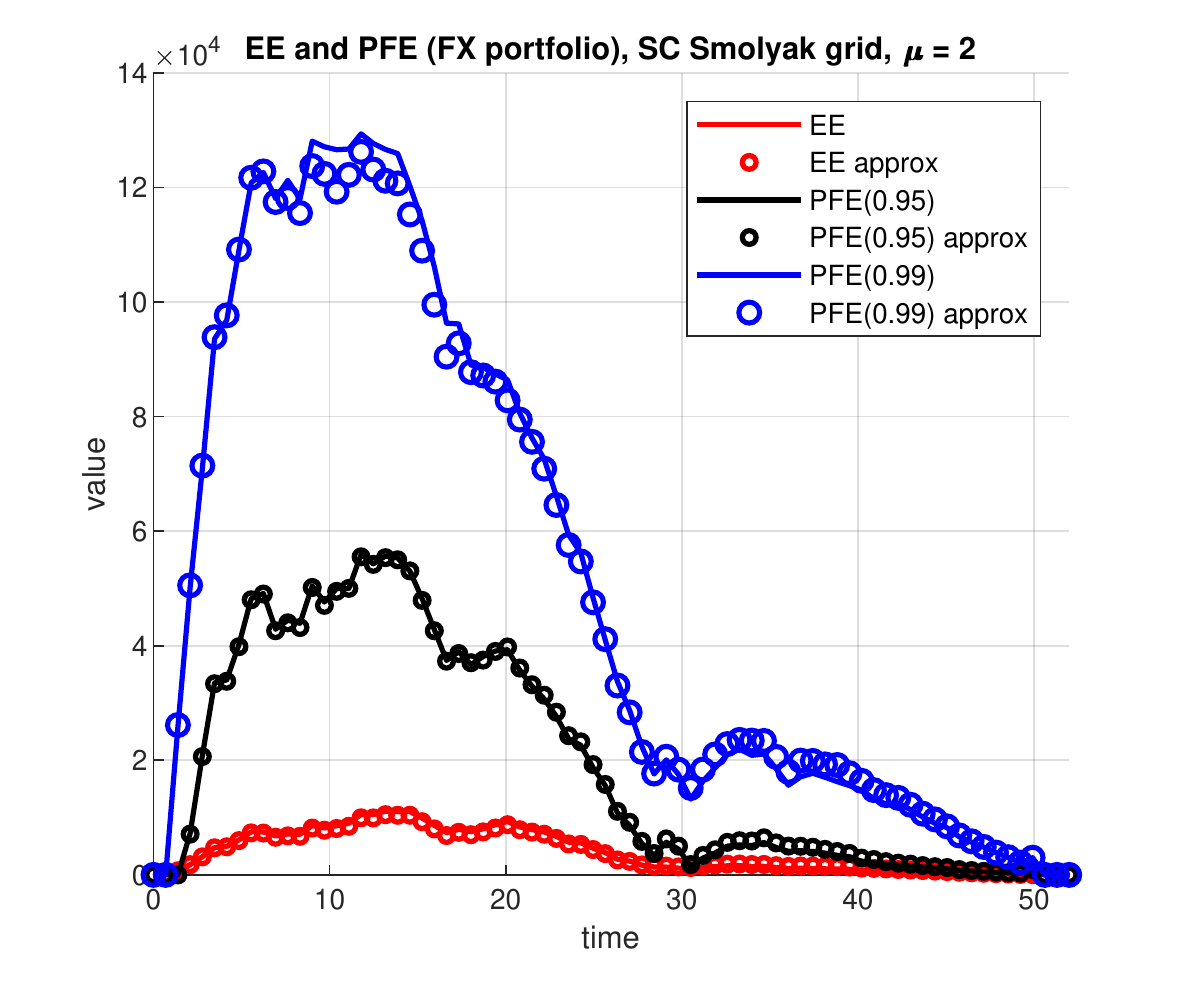}\\
    \includegraphics[width=0.49\textwidth]{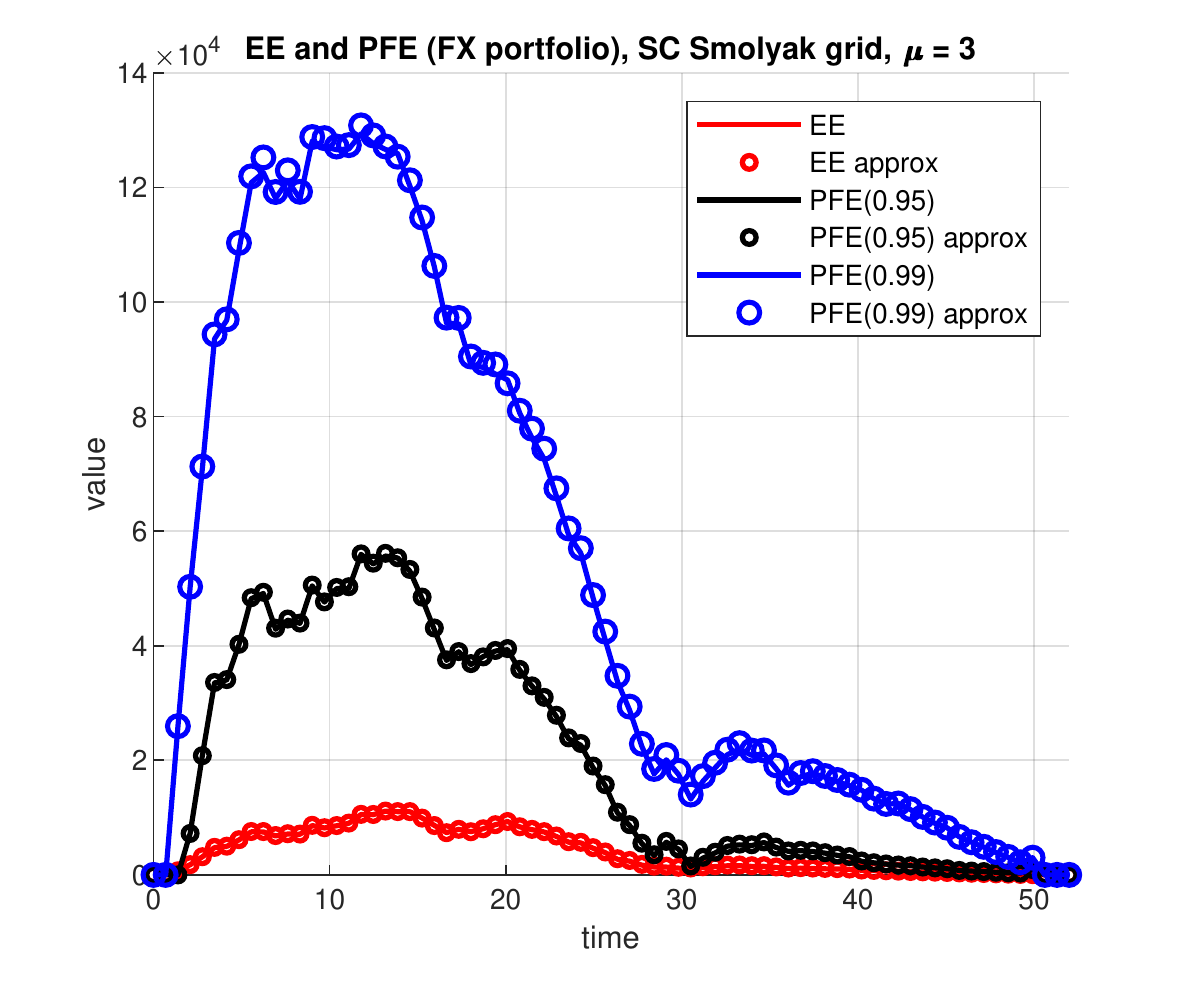}
      \caption{Expected Exposures, PFE for different significance level $(0.95,0.99)$ depending on the parameter $\mu=1,2,3.$}
      \label{fig:PortfolioSmolyak}
\end{figure}
The visual inspection shows that highly satisfactory results are obtained, already for $\mu=2$. This particular choice requires only $113$ portfolio evaluations per expiry date, $T_k.$ Details regarding the error defined as a relative difference are presented in Table~\ref{table:NDCase}. We conclude that the method performs well, and a significant computational gain is guaranteed. In the next section, further improvements for the methods are discussed.



\begin{table}[!h]
\begin{center}\footnotesize
\caption{The total number of portfolio evaluations depending on the parameter $\mu.$ The speed-up number reflects the reduction of portfolio evaluations against Monte Carlo simulation with $25k$ paths. In the experiment, we considered $N_T=75$ exposure dates, and where the error is defined as $\text{error}=\frac{1}{N_T}\sum_{k=1}^{N_T}| f(T_k)- \widetilde f(T_k)|/f(T_k)1_{f(T_k)\neq0}$, for $f(T_k)$ being  EE or PFE and where $\widetilde f(T_k)$ indicates the SC approximation.}
 \begin{tabular}{c|c |c| c  c  c }
  \hline
   & \multirow{2}{*}{total $\#$ of portfolio eval.}&speed-up& \multicolumn{3}{c}{error} \\\cline{4-6}
   $d=7$&  &vs. Monte Carlo &EE&$\text{PFE}_{0.95}$&$\text{PFE}_{0.99}$\\ \hline
   $\mu=1$& $75\times 15$ &$1666$ &3.9686 &   2.6203 &   0.3865\\
   $\mu=2$&$75\times113$&$221$ &0.1023   &0.0720   &0.0248 \\
   $\mu=3$&$75\times589$&$42$ & 0.0313   &0.0101 &  0.0166\\\hline
   Monte Carlo&$75\times25000$&-& - & -&- \\\hline
 \end{tabular}
 \label{table:NDCase}
 \end{center}
 \end{table}

\subsection{Divide and Conquer: Further Reduction of Portfolio Evaluations}
\label{sec:divideand}
In Section~\ref{sec:2_1}, in Table~\ref{Tab:diemsionality}, we have shown a relation between the dimensionality and the number of portfolio evaluations. The application of the sparse grid techniques facilitates an improvement compared to the tensor product. However, it is also clear that with the increasing dimensionality, the benefits of the SC method are reduced, i.e., although Table~\ref{Tab:diemsionality} shows that the number of collocation points grows only polynomially in the dimension, $d$ for $d>10$, the number of grid points may exceed the typical number of Monte Carlo paths, especially for high parameter value for $\mu.$

Although some portfolios may depend on multiple risk factors, individual trades typically do not. Even trades such as cross-currency swaps will highly unlikely depend on many currencies- this is an important observation as it allows us to the number of reduce portfolio evaluations further.
By dividing the portfolio into sub-portfolios, we can {\it linearize} the problem and further reduce the numerical complexity. For example, a grid with $d=8$ and level $\mu=3$, according to Table~\ref{Tab:diemsionality}, would require $849$ grid points. If we divide the portfolio into two sub-portfolios for $d=4$, we only will need $137$ evaluations per trade, and with four sub-portfolios with $d=2$, only $29$  evaluations yielding a huge computational gain~\footnote{Note that reported evaluations are per sub-portfolio, however since these portfolios are mutually disjoint none of the trades is evaluated more than once.}.

\begin{rem}[Sub-portfolios and Netting]
The division into sub-portfolio has no significant impact on the netting effects. The SC method is only used to approximate the value of a sub-portfolio, which later is evaluated for all the underlying risk factors.
\end{rem}

Let us look at a realistic portfolio discussed in Section~\ref{sec:multid}, which consists of swaps in different currencies. The strategy proposed above would yield the following portfolio decomposition~\footnote{We consider here $n_1$ collocation points for every dimension, $j=1,\dots,d_c$.}:
\begin{eqnarray}
\label{ref:porftolioMulti2}
V(T_k,{\bf X}(T_k))&=&\overline V^b(T_k,r_b(T_k)) + \sum_{j=1}^{d_c}y_j^b(t)\overline V^j(T_k,r_j(T_k))\nonumber\\
&\approx&g_b\big(\{\overline V^b\}_{i},r(T_k)\big)+\sum_{j=1}^{d_c}y_j^b(t)g_j\big(\{\overline V^j\}_{i},r_j(T_k)\big),\;\;\;i=1,\dots,n_1,
\end{eqnarray}
where ${\bf X}(t)$ is defined in~(\ref{eqn:X}) and where $\overline V^b(\cdot)$ and $\overline V^j(\cdot)$ indicate portfolios in the base currency $b$ and foreign currency $j$, respectively, and are defined as:
\begin{eqnarray}
\label{eqn:VbVj}
\overline V^b(T_k,r_b(T_k)):=\sum_{i=1}^{M_b}V_i^b(T_k,r_b(T_k))
\;\;\;\text{and}\;\;\;\overline V^j(T_k,r_j(T_k)):=\sum_{l=1}^{M_j} V_l^j(T_k,r_j(T_k)).
\end{eqnarray}

The representation above makes up a further improvement of the SC method for computing exposures. By approximating each of the $d_j$ sub-portfolios, we can separate the value of a portfolio from the cross-currency variable, $y_j^b$, for $j=1\dots,d_c$. This procedure has reduced the dimensionality from a 7D problem to a sum of 1D problems.

In Table~\ref{table:1DCaseSUB}, the numerical results are presented. We observe that by dividing the portfolio into sub-portfolios, a significant valuation improvement is achieved, i.e., we only require $4$ to $6$ evaluations per exposure date to guarantee errors that are even smaller than those reported in Table~\ref{table:NDCase}. Compared with a brute-force Monte Carlo pricing, the speed improvement varies from a factor of 6000 for $n_1=4$ to 4000 for $n_1=6$ collocation points.

\begin{table}[!h]
\begin{center}\footnotesize
\caption{The total number of trade evaluations, $\#$, depending on the number of the collocation points, $n_1$. The speed-up number reflects the reduction of portfolio evaluations against Monte Carlo simulation with $25k$ paths. Number of valuations represents the product of the number of exposure dates: $T_k$, $N_T=75,$ times the number of collocation points, $n_1$.}
 \begin{tabular}{c|c |c| c  c  c }
  \hline
   & \multirow{2}{*}{total $\#$ of sub-portfolio eval.}&speed-up& \multicolumn{3}{c}{error} \\\cline{4-6}
  &&vs. Monte Carlo &EE&$\text{PFE}_{0.95}$&$\text{PFE}_{0.99}$\\ \hline
   $n_1=2$&$75\times 2 $&$12500$& 0.6048   &0.5346 &  0.5841\\
   $n_1=3$&$75\times 3 $&$8333$& 0.1432   &0.0011 &  0.1605\\
   $n_1=4$&$75\times 4$&$6250$& 0.0074   &0.0757 &  0.0088\\
   $n_1=5$&$75\times 5$&$5000$& 0.0071   &0.0289 &  0.0223\\
   $n_1=6$&$75\times 6$&$4167$ &  0.0014   &0.0024 &  0.0126\\\hline
   Monte Carlo&$75\times25000$&-& - & -&- \\\hline
 \end{tabular}
 \label{table:1DCaseSUB}
 \end{center}
 \end{table}

The decomposition into a sum of 1D problems, as described above, may not always be possible in an actual portfolio as it depends on the type of derivatives, e.g., with cross-currency swaps depending on three currencies, we would expect sums of 3D portfolios.

\begin{rem}[Parallelization]
The encouraging results presented in this section can be further improved: each sub-portfolio can be evaluated independently, allowing for parallelization at the sub-portfolio level. Although relatively cheap, one can also parallelize the evaluation of function $\widetilde g(\cdot)$ at each Monte Carlo path facilitating even further speed gains.
\end{rem}
\subsection{Portfolios with non-linear products driven by multi-factor models}
In the previous section, we have presented the benefits of dividing the pricing problems into pieces. Unfortunately, the same strategy cannot be applied to problems that depend on multiple factor models. When determining risk limits, it is desired to determine PFEs based on multi-factor models. Such models mimic the yield curve dynamics time time much better than single-factor models. In this section, we consider such a case.\\
\indent This section considers an extension of the case discussed in Section~\ref{sec:divideand}- a portfolio consisting of linear and non-linear derivatives driven by multi-factor processes. In particular, we consider the portfolio defined in~(\ref{ref:porftolioMulti2}) with additional, non-linear products, namely- several interest rate swaptions with the varying strike, maturity and notional values. Furthermore, swaptions are added for each currency, and the portfolio is ensured to be balanced, i.e., there is no dominance of either interest rate swaps or swaptions that potentially could skew the results.\\
\indent In the experiment, each of the interest rate processes is driven by the 2-factor Hull-White model~\cite{Hull1994}, $r(t)=x^r(t)+y^r(t)+\psi^r(t)$ with $x^r(t)$ and $y^r(t)$ being OU processes with the initial and long-term mean equal to $0$ and where $\psi^r(t)$ is a term-structure (details are presented in~\cite{BrigoMercurio:2007}). Swaptions, contrary to interest rate swaps, are computationally more expensive as they require numerical integration and optimization for every Monte Carlo realization $\{x^r(t_i),y^t(t_i)\}$ (the details regarding the pricing of swaptions under the two-factor models are, for reader's convenience, in~\ref{sec:appendix3})).

Under the two-factor model the approximating formula is as follows,
\begin{eqnarray}
V(T_k,{\bf X}(T_k))
\approx g_b\big(\{\overline V^b\}_{i_1,i_2},x^r(T_k),y^r(T_k)\big)+\sum_{j=1}^{d_c}y_j^b(t)g_j\big(\{\overline V^j\}_{i_1,i_2},x^r_j(T_k),y^r_j(T_k)\big),
\end{eqnarray}
where $i_1=1,\dots,n_1$, $i_2=1,\dots,n_2$, the state vector ${\bf X}(t)$ now includes a pair of $x^r(t)$, $y^r(t)$ per each currency.
Both $\overline V^b(\cdot)$ and $\overline V^j(\cdot)$ are the portfolios in each currency, as defined similarly as in~(\ref{eqn:VbVj}), with the exception that these portfolios explicitly depend on the underlying stochastic factors, $x^r(t)$ and $y^r(t)$, and additionally include swaptions.  \\
\indent As in the single factor Hull-White interest rate model, the model's performance is excellent. In Table~\ref{table:non_linear_portfolio}, convergence results are presented. We report that already, for $\mu=2$, high-quality results are obtained. For two-dimensional problems, $d=2$, with $\mu=2$, require only $13$ portfolio evaluations (see Table~~\ref{Tab:diemsionality})), thus resulting in a drastic reduction in the computational time. \\
\indent Figure~\ref{fig:PortfolioSmolyakSwaptions} illustrates the quality for the approximation for different exposure dates. Again, already for $\mu=2$ we report no significant errors along the lifetime of the portfolio.
\begin{table}[!h]
\begin{center}\footnotesize
\caption{The error associated with the sparse approximation of the ``non-linear'' portfolio of derivatives. The settings of the experiment are chosen described in Table~\ref{table:NDCase}. The number of portfolio evaluations is as in Table~\ref{table:1DCaseSUB}.}
 \begin{tabular}{c| c  c  c }
  \hline
   & \multicolumn{3}{c}{error} \\\cline{2-4}
  &EE&$\text{PFE}_{0.95}$&$\text{PFE}_{0.99}$\\ \hline
   $\mu=1$&   0.2518 &	 0.2553& 	 0.2932   \\
   $\mu=2$&    0.0266& 	 0.0111& 	 0.0177  \\
   $\mu=3$&    0.0213& 	 0.0178& 	 0.0250 \\
 \end{tabular}
 \label{table:non_linear_portfolio}
 \end{center}
 \end{table}
 \begin{figure}[h!]
  \centering
    \includegraphics[width=0.49\textwidth]{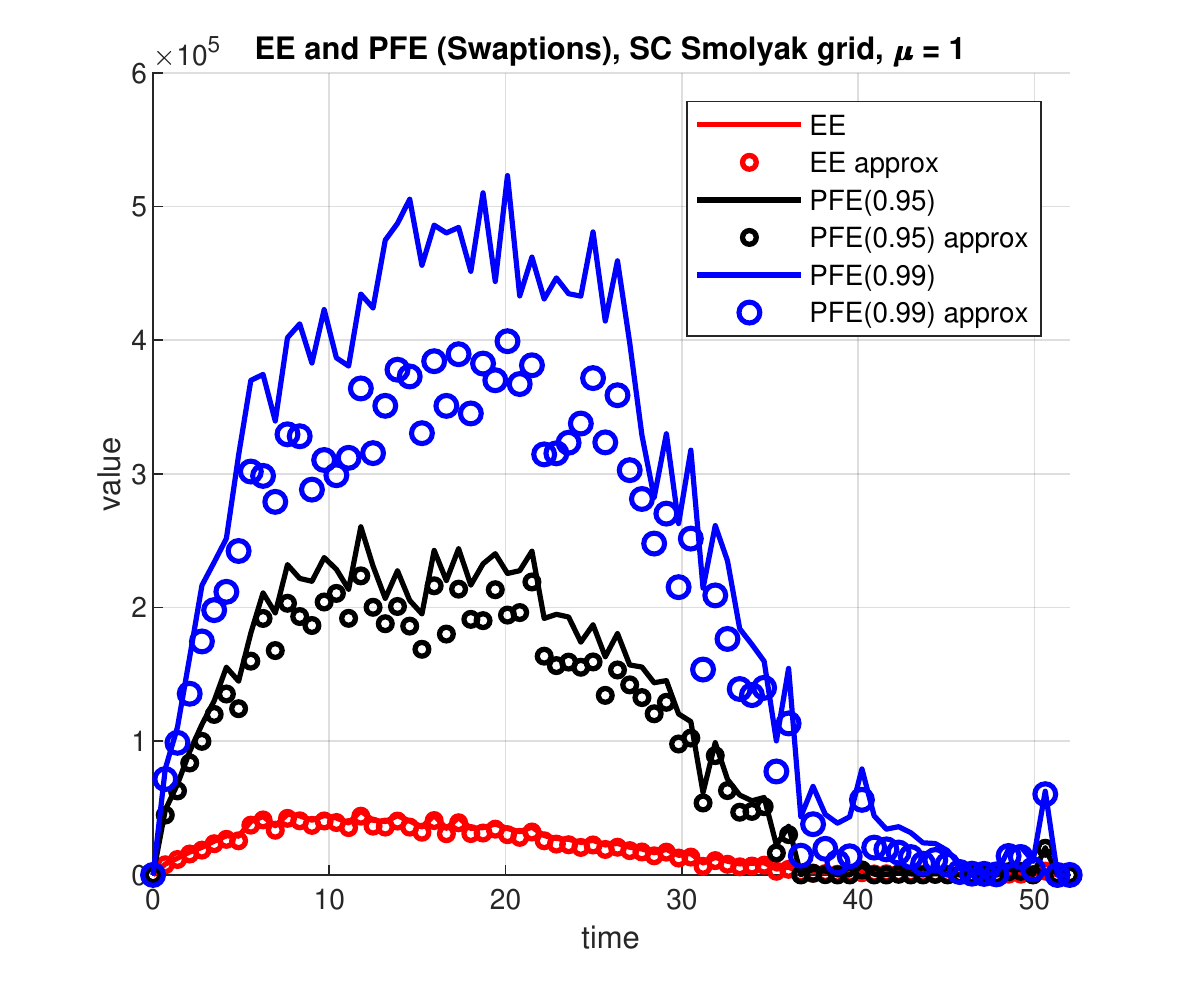}
    \includegraphics[width=0.49\textwidth]{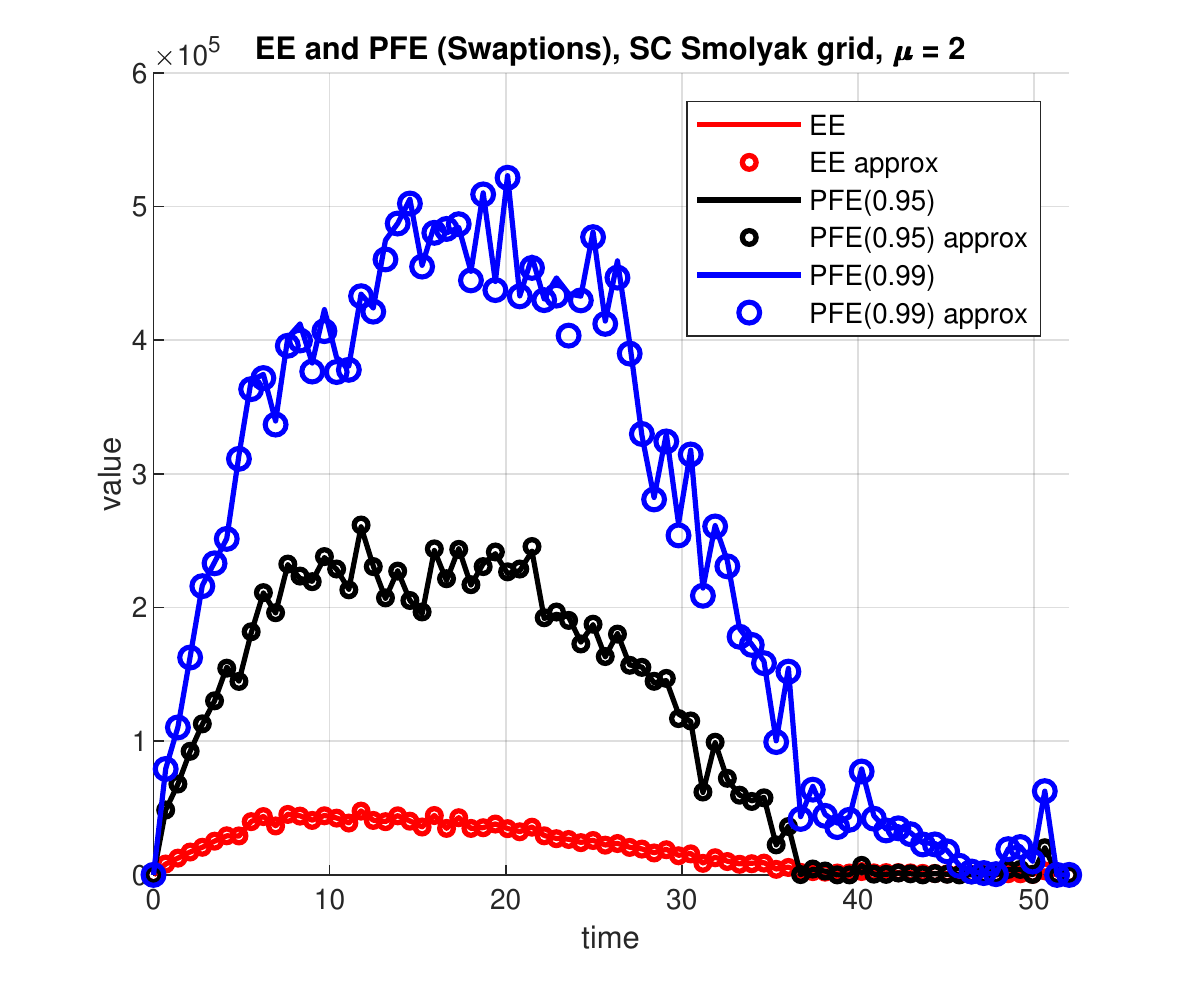}
      \caption{Portfolio with interest rate swaps and swaptions in different currencies. Expected Exposures, PFE for different significance level $(0.95,0.99)$ depending on the parameter $\mu=1,2.$}
      \label{fig:PortfolioSmolyakSwaptions}
\end{figure}

We conclude that under a multi-factor setting, the sparse grid methodology for approximating a portfolio consisting of linear and non-linear derivatives, there is no deterioration observed compared to a portfolio solely consisting of linear products. Furthermore, our experiments suggest that the method can be applied to even more exotic, possibly callable, derivatives.

\section{The SC Method: Implementation Details and Improvements}
\label{sec:4}
This section focuses on details and improvements regarding the SC method when applied to portfolio evaluations. Here, we discuss a domain scaling for Smolyak's grid, optimal collocation points, and adaptive grids for multi-D cases.
\subsection{Domain Scaling}
\label{sec:domainScaling}
Since Smolyak's grid is constructed in a $d$-dimensional cube, $[-1,1]^d$, a transformation of the grid needs to take place. It is a distinct feature compared to the SC method for low dimensions where the grid is built from the optimal points based on the quadrature points (see Section~\ref{sec:2_1}). Each dimension, $i$, is stretched with the following transformation:
\begin{equation*}
x_{i} = \frac12u_i (ub_i-lb_i) + lb_i+ \frac12(ub_i-lb_i),
\end{equation*}
where $u_i$ is the initial grid point in $[-1,1]$ and $lb_i$ and $ub_i$ are the lower and upper bound, respectively. Both quantiles, $lb_i$ and $ub_i$, can be computed either based on Monte Carlo paths or by using analytical properties of the underlying distribution of $X(t).$
Using Monte Carlo paths for a process $X(t),$ one can either obtain $lb_i=\min X(T_k)$ and $lb_i=\max X(T_k)$ at the exposure date $T_k$ or compute quantiles at a certain level $\alpha$:
\begin{eqnarray}
\label{eqn:lb}
lb_i&=&\min\{y \in\R:F_{X(T_k)}(y)\geq 1-\alpha\},\\
\label{eqn:ub}
ub_i&=&\min\{y \in\R:F_{X(T_k)}(y)\geq \alpha\}.
\end{eqnarray}

Figure~\ref{fig:GridStretch} illustrates the stretched grid using~(\ref{eqn:lb}) and~(\ref{eqn:ub}) applied to a 2D case. The figure shows the grid stretched for correlated variables $X_1$ and $X_2$. We have, for $\alpha=0.95$, obtained the following lower and upper bounds: $lb_1=-0.003$, $lb_2=1.055$, $ub_1=0.023$ and $ub_2=1.356.$ A similar effect is achieved for other processes in the underlying system of SDEs.
\begin{figure}[h!]
  \centering
    \includegraphics[width=0.45\textwidth]{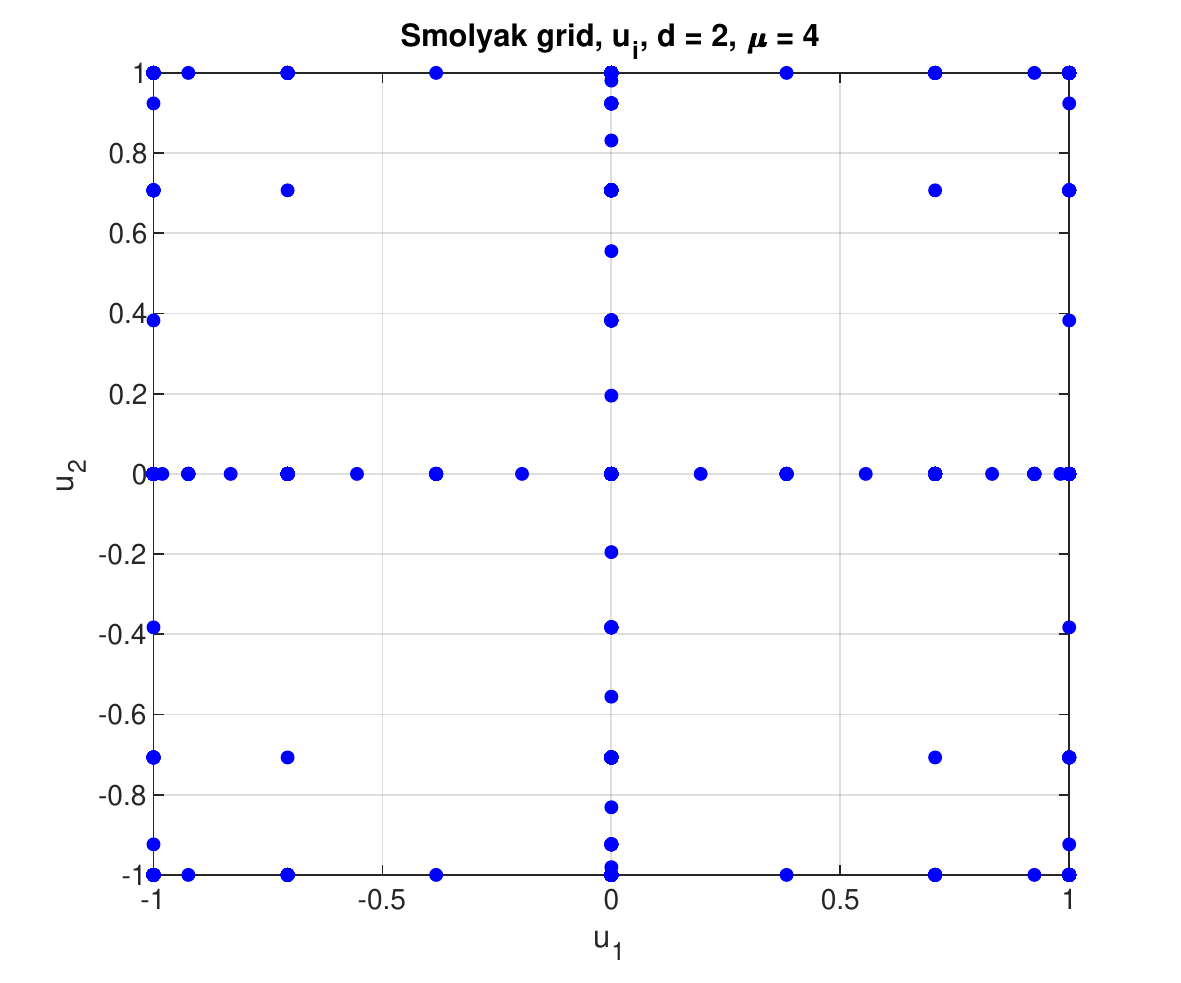}
    \includegraphics[width=0.45\textwidth]{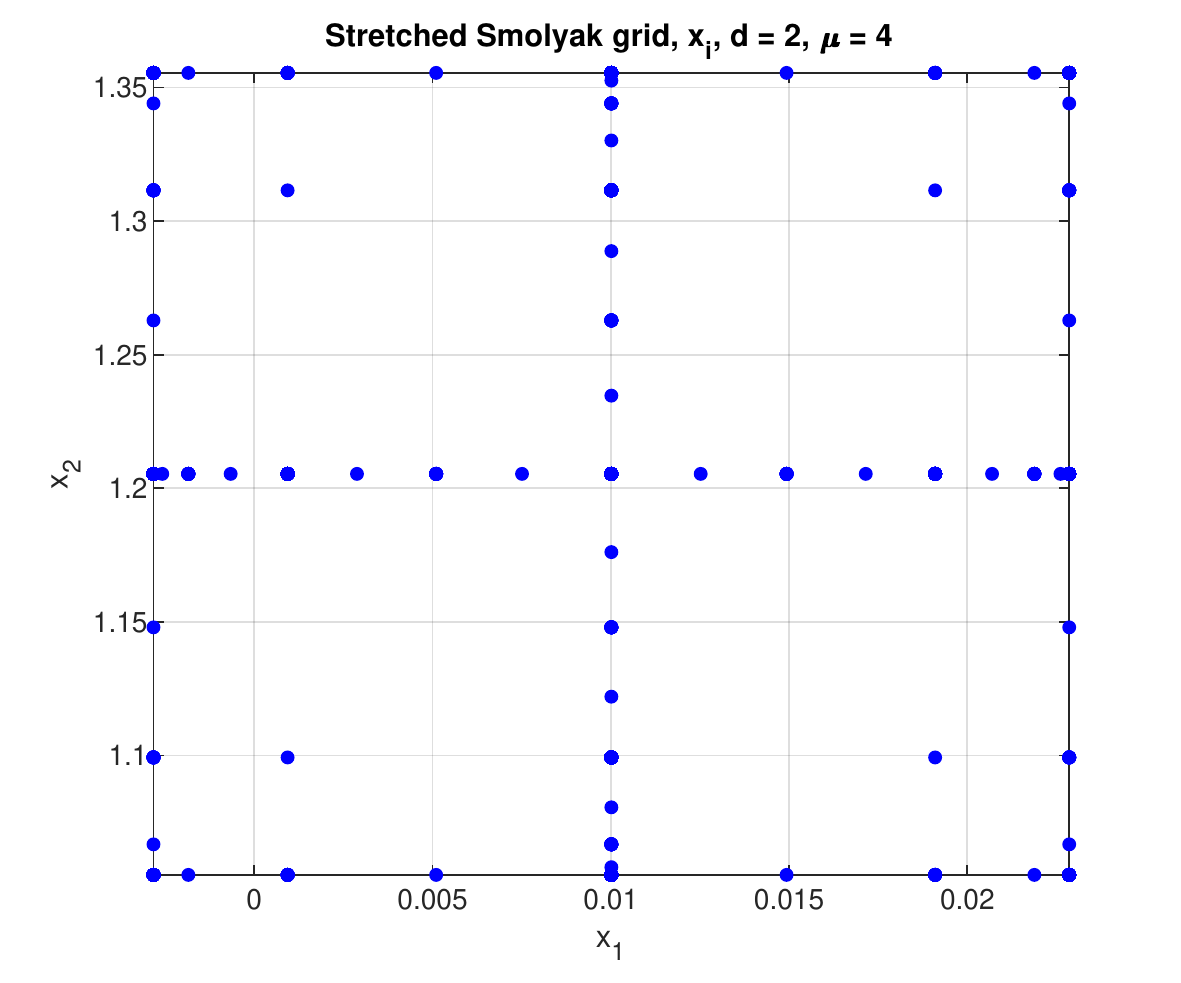}\\
      \caption{Smolyak's grid stretching based on quantiles. LHS: original grid. RHS: stretched grid.}
      \label{fig:GridStretch}
\end{figure}
\subsection{Sub-Optimal Collocation Points}
Although the collocation points under the Smolyak's algorithm are given and do not depend on the properties of the underlying processes~\footnote{Except for the domains, as discussed in Section~\ref{sec:domainScaling}.}, this is not the case for the {\it standard} SC method, described in Section~\ref{sec:2_1}.
These collocation points are determined based on the moments of the underlying stochastic model. This article focused on the most commonly used process for assets. We dealt with either normal or lognormal distributions; however, when the underlying processes are complex, one would need to compute  moments of such a process to establish the corresponding collocation points. This could be troublesome for two reasons: the moments are not always available in closed form, and moments obtained from Monte Carlo simulation may give rise to inaccurate results (the collocation method requires an LU decomposition of the so-called Gram matrix constructed by $n_i^2$ moments, $n_i$ being the number of collocation points in the $i$'th dimension).

To address the issue described above, one can calculate the optimal collocation points based on the normally distributed kernel process, i.e., for a random variable $\xi$ for which the moments are not available, the collocation points are computed by:

\[\xi_i = F^{-1}_\xi(F_{\mathcal{N}(0,1)}(x_i^{\mathcal{N}(0,1)})).\]

If we consider the CDF, $F_\xi(\cdot)$, of a continuous random variable $\xi$, the mapping $y=F_\xi(x)$ is bijective and $F_\xi(x)$ is strictly increasing, so is $F^{-1}_\xi(y)$. This implies that the argument $x$ can be obtained by the inverse interpolation of $F_\xi(y)$ against $y$, which can be done at essentially {\it no cost}. When the random variable $\xi$ is obtained from a Monte Carlo simulation, $F_\xi(\cdot)$ can be estimated from the empirical cumulative distribution function.
\subsection{Anisotropic and Adaptive Grids}
When dealing with sparse grids and valuation of portfolios, each of the dimensions is treated equally (the same number of collocation points in each dimension). This may not be desired, as some of the risk factors may be dominant in a portfolio risk profile. Anisotropic sparse grids address this problem and allow for unequal distribution of the collocation points over different dimensions. Discussions on this subject are well-covered in~\cite{Judd1998}.

Now, let us look at another aspect of the grid construction process.
Figures~\ref{fig:2d}, \ref{fig:3d}, and~\ref{fig:pathsValue}, show that the SC method generates a rectangular grid of collocation points. This may be sub-optimal, especially for correlated processes where the realizations would cluster. In such a multidimensional case where the underlying model consists of $d$ correlated risk factors, we use a multidimensional version of the normal collocation points. Thus, the grid of points would be determined by a multidimensional Gaussian copula estimated based on the risk factors ${\bf X}(t)$.








For $d$ risk factors, ${\bf X}(t)=[X_1(t),X_2(t),\dots,X_d(t)]^\T$, with their corresponding number of collocation points $n_1,n_2,\dots,n_d$, we have:
\begin{eqnarray}
\hat x_{i_1} &=&{\bf L}_{1,1} x_{i_1}^{\mathcal{N}(0,1)},\nonumber\\
\hat x_{i_2}|\hat x_{i_1} &=&{\bf L}_{2,1}\hat x_{i_1} +{\bf L}_{2,2} x_{i_2}^{\mathcal{N}(0,1)},\nonumber\\
\hat x_{i_3}|\hat x_{i_1},\hat x_{i_2} &=&{\bf L}_{3,1}\hat x_{i_1} + {\bf L}_{3,2}\hat x_{i_2}+{\bf L}_{3,3} x_{i_3}^{\mathcal{N}(0,1)},
\label{eqn:collNormal}
\end{eqnarray}
where $x_{i_j}^{\mathcal{N}(0,1)}$ indicates the $j$'th collocation point from a standard normal and where ${\bf L}_{i,j}$ is the $(i,j)$'th element of the Cholesky decomposition, ${\bf C}={\bf L}{\bf L}^\T$ with ${\bf C}$ being the correlation matrix.
In order to compute all the collocation points, we require the following inversions:
\begin{eqnarray}
\nonumber
x_{i_1}&=&F^{-1}_{X_1}(F_{\mathcal{N}(0,1)}(\hat x_{i_1})),\\\nonumber
x_{i_2}|x_{i_1}&=&F^{-1}_{X_2}(F_{\mathcal{N}(0,1)}(\hat x_{i_2}|\hat x_{i_1})),\\
x_{i_3}|x_{i_1},x_{i_2}&=&F^{-1}_{X_3}(F_{\mathcal{N}(0,1)}(\hat x_{i_3}|\hat x_{i_1},\hat x_{i_2})).
\label{eqn:collTarget}
\end{eqnarray}
In Figure~\ref{fig:2Dexample}, an illustrative example of the grid distribution in 2D is presented. The LHS figure shows the grid distribution given in~(\ref{eqn:collNormal}), while the RHS figure corresponds to the actual grid points from Equation~(\ref{eqn:collTarget}).

\begin{figure}[h!]
  \centering
    \includegraphics[width=0.49\textwidth]{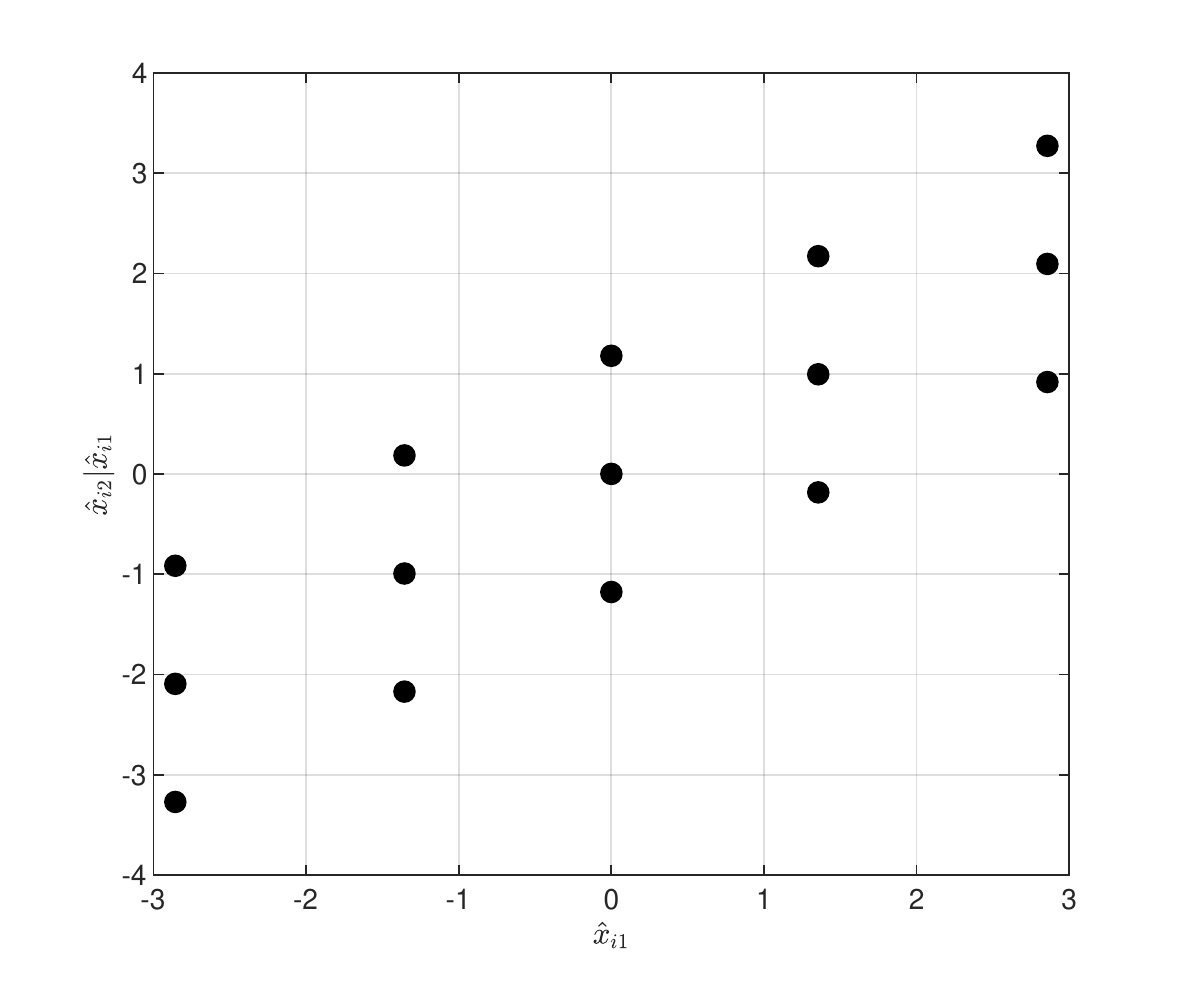}
    \includegraphics[width=0.49\textwidth]{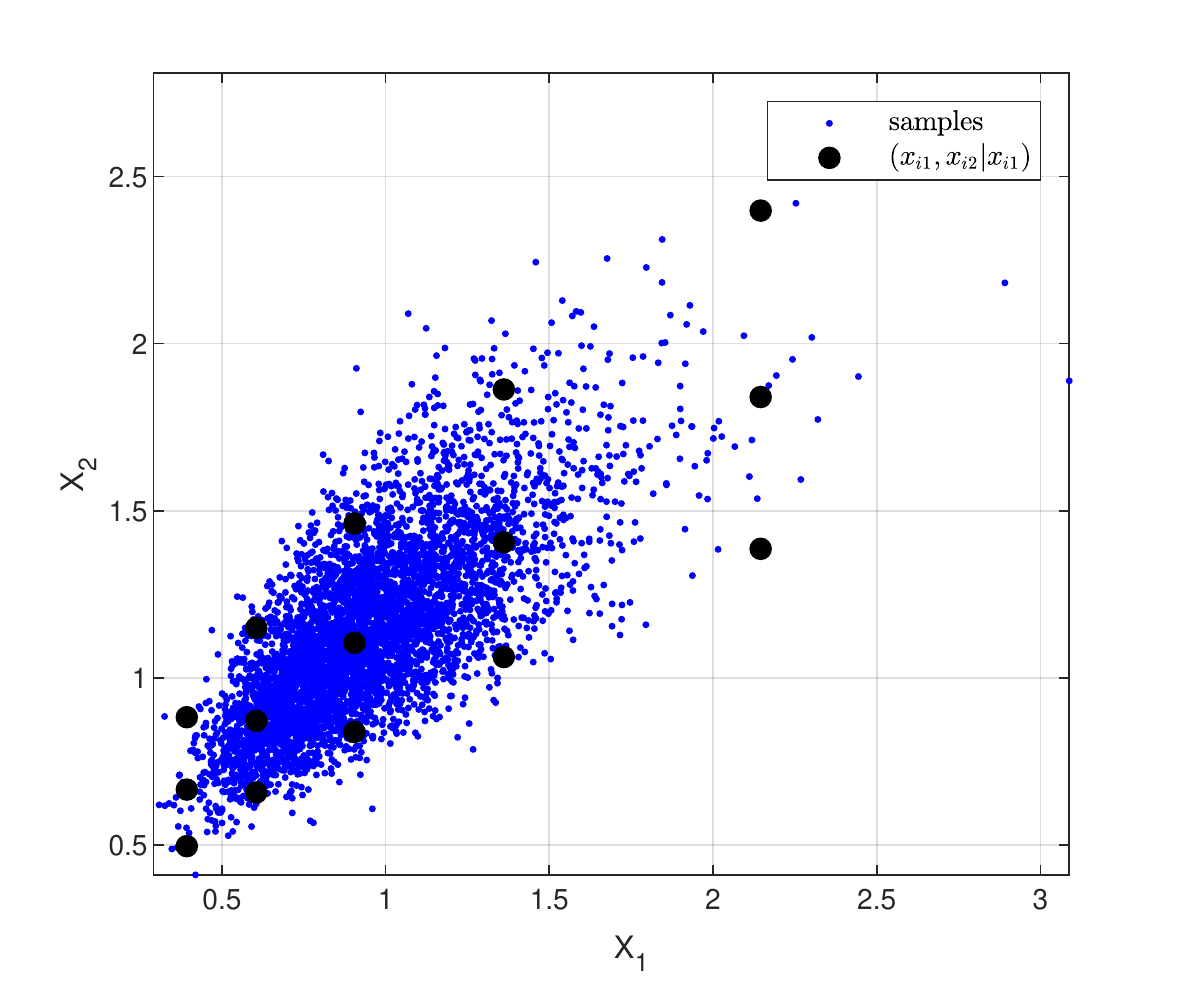}
      \caption{LHS: 2D collocation points $\hat x_{i1}$ and $\hat x_{i2}|\hat x_{i1}$ for $\rho=0.6.$ RHS: Scatter plot for two lognormal random variables $X_1\sim \e^{Z_1}$ and $X_2\sim\e^{Z_2}$ where $Z_1\sim\mathcal{N}(-0.1,0.3)$, $Z_2\sim\mathcal{N}(0.2,0.2)$, $N_1=5$, $N_2=3$ and $\rho_{Z_1,Z_2}=0.6$ and the corresponding collocation points $x_{i1}$ and $x_{i2}|x_{i1}.$}
      \label{fig:2Dexample}
\end{figure}

Given the representation above the approximation for the portfolio value is expressed by:
\begin{eqnarray*}
\widetilde g(V(T_k,\{{\bf X}\}_{i_1,\dots,i_d}),{\bf X}(T_k))=\sum_{i_1=1}^{n_1}\cdots \sum_{i_d=1}^{n_d}V(T_i,x_{i_1};x_{i_2}|x_{i_1};\dots;x_{i_d}|x_{i_1},\dots,x_{i_{d-1}})\hat\psi({\bf X}(T_k)),
\end{eqnarray*}
where
$\hat\psi({\bf X}(T_k))$ is defined in~(\ref{eqn:LagrangePortfoli}).


\section{Error Analysis and Convergence}
\label{sec:Error}
This section focuses on the error analysis of the approximating function $\widetilde g(\cdot)$. In particular, we analyze the impact of the approximation on a portfolio containing interest rate swaps, possibly in different currencies and driven by the BSHW model defined in~(\ref{eqn:FXmulti}). Such a portfolio is of particular interest in the industry.
The result below gives us essential insight into the distributional properties of such a portfolio.
\begin{lem}[Distribution of a portfolio of multi-currency swaps]
\label{lem:1}
Under the multi-currency BSHW model, defined in~(\ref{eqn:FXmulti}), a portfolio, at the exposure date $T_k,$ consisting of interest rate swaps in $d_c$ foreign currencies, is distributed as a linear combination of lognormally distributed random variables
\begin{equation}
{\bf V}(T_k,{\bf X}(T_k))\sim \sum_{\ell\in\Omega} c_\ell\e^{{Z}_\ell(T_k)},
\end{equation} with a constant $c_\ell\in\R$, where ${Z}_\ell(T_k)~\sim \mathcal{N}({ m_\ell},{ \Sigma}^2_\ell)$, for a certain constant mean, $m_\ell$, variance, ${\Sigma}^2_\ell$, and where $\Omega$ represents the set of all ZCBs under all underlying currencies.
\end{lem}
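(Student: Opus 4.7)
The plan is to show that the portfolio value at time $T_k$ reduces to a finite linear combination of quantities of the form $e^{Z_\ell(T_k)}$ with $Z_\ell(T_k)$ Gaussian, by tracking how each constituent piece — swap cash-flows, zero-coupon bonds, and foreign-exchange rates — fits into the Gaussian/lognormal structure inherited from the BSHW model. The starting point will be the swap-decomposition identity visible from~(\ref{swapSingleCCy}): after telescoping the floating leg, each single-currency swap $V_i(T_k,r(T_k))$ is a deterministic linear combination of zero-coupon bonds $P(T_k,T_m)$ for payment dates $T_m>T_k$. Since~(\ref{eqn:HW:ZCB}) gives $P(T_k,T_m)=\exp(A(T_k,T_m)+B(T_k,T_m)r(T_k))$ with $r(T_k)$ Gaussian, each such bond is $e^{Z}$ for a Gaussian $Z$, so every base-currency swap is of the desired form.

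Next I would extend the argument to the foreign sub-portfolios. Under the BSHW dynamics~(\ref{eqn:FXmulti}), the convexity-adjusted short rate $r_k(t)$ is still an affine Gaussian process (the drift correction only shifts the mean), so the foreign ZCBs $P_k(T_k,T_m)$ are lognormal with a Gaussian exponent affine in $r_k(T_k)$. For the FX factor, I would take the explicit solution
\begin{equation*}
y_k^b(T_k)=y_k^b(0)\exp\!\left(\int_0^{T_k}\!\left(r_b(z)-r_k(z)-\tfrac12\sigma_{y,k}^2\right)\!\d z+\sigma_{y,k}W_k^y(T_k)\right),
\end{equation*}
and argue that the exponent is Gaussian because it is a linear functional of the joint Gaussian process $(r_b,r_k,W_k^y)$: the Itô integrals are Gaussian, and the time integrals $\int_0^{T_k}r_b(z)\d z$, $\int_0^{T_k}r_k(z)\d z$ are limits of Gaussian Riemann sums, hence Gaussian. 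Therefore each product $y_k^b(T_k)P_k(T_k,T_m)$ is the exponential of a sum of two Gaussian random variables, which is itself Gaussian, giving another lognormal contribution.

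To finish, I would collect all contributions: denoting by $\Omega$ the index set labelling each (currency, bond-maturity) pair appearing across the base and foreign sub-portfolios, the portfolio writes as
\begin{equation*}
V(T_k,{\bf X}(T_k))=\sum_{\ell\in\Omega}c_\ell\,e^{Z_\ell(T_k)},
\end{equation*}
where, for a base-currency bond, $Z_\ell(T_k)=A(T_k,T_m)+B(T_k,T_m)r_b(T_k)$, and for a foreign bond, $Z_\ell(T_k)=\log y_k^b(T_k)+A_k(T_k,T_m)+B_k(T_k,T_m)r_k(T_k)$. The coefficients $c_\ell\in\mathbb{R}$ are the deterministic products of notionals, year fractions, strikes, and signs coming from the telescoping of the floating leg. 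Each $Z_\ell(T_k)$ is Gaussian with some mean $m_\ell$ and variance $\Sigma_\ell^2$ computable from the model parameters, establishing the claim.

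The main obstacle is the foreign-currency leg: one has to be careful that $\log y_k^b(T_k)$ genuinely is Gaussian under the risk-neutral measure of the base currency, which hinges on the joint Gaussianity of $(r_b,r_k,W_k^y)$ and on the fact that Gaussianity is preserved under continuous linear operations such as time integration. Once this is in place, combining it with the affine lognormal form of foreign ZCBs is routine, and the remaining work is purely bookkeeping to identify the coefficients $c_\ell$ and the index set $\Omega$.
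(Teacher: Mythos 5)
Your proof follows essentially the same route as the paper's: telescope each swap into a linear combination of zero-coupon bonds, use the affine form $P(t,T)=\e^{A(t,T)+B(t,T)r(t)}$ with Gaussian $r(t)$ to obtain lognormality of each bond, and then fold the FX factor into the exponent for the foreign legs before collecting terms over the index set $\Omega$. If anything, your treatment of the foreign-currency leg is more careful than the paper's, which simply asserts that ``the same analogy holds'' for the multi-currency case: you explicitly verify that $\log y_k^b(T_k)$ is Gaussian as a linear functional of the jointly Gaussian family $(r_b,r_k,W_k^y)$, so that the exponent of $y_k^b(T_k)P_k(T_k,T_m)$ is Gaussian because its two summands are \emph{jointly} Gaussian rather than merely individually Gaussian --- exactly the step the paper leaves implicit.
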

\begin{proof}
Under the Hull-White model, every zero-coupon bond is log-normally distributed, i.e.,
\begin{eqnarray}
P(t,T)=\exp\left({A(t,T)+B(t,T)r(t)}\right)\sim \log\mathcal{N}\left(\mu_{P},\sigma_{P}^2\right),
\end{eqnarray}
with a certain mean parameter, $\mu_{P}$, variance, $\sigma^2_{P}$, where $r(t)$ is the short-rate at time $t$ and $A(t,T)$ and $B(t,T)$ are time-dependent functions defined in~(\ref{eqn:A}) and~(\ref{eqn:B}).
As a result, an interest rate swap, as presented in~(\ref{swapSingleCCy}) is, after some simplifications, given as a linear combination of zero-coupon bonds,
\begin{eqnarray}
\label{eqn:swap1}
V(t,r(t))=N\left(P(t,T_{j})-P(t,T_{M})\right)-NK\sum_{k=j+1}^{M}\tau_kP(t,T_k),
\end{eqnarray}
for a certain strike, $K$, a notional amount, $N$, and swap payment dates $T_{j+1},\dots, T_M$. The interest rates swap defined in~(\ref{eqn:swap1}) can be recognized as a linear combination of lognormally distributed random variables:
\begin{eqnarray}
V(t,r(t))=\sum_{k=j}^Ma_k\e^{\mu_k+\sigma_kr_b(T_k)}=:\sum_{k=j}^Ma_kY_k,
\end{eqnarray}
for some constant parameter $a_k$ and where $Y_k\sim \log\mathcal{N}\left(\mu_k,\sigma_k^2\right)$. The same analogy holds for a portfolio consisting of interest rate swaps, even in the case of a portfolio involving multiple currencies, driven by the SDEs in~(\ref{eqn:FXmulti}),
\begin{eqnarray}
\label{ref:porftolioMultiLognormal}
V(t,{\bf X}(t)) &=&\sum_{i=1}^{M_b}V_i^b(t,r_b(t)) + \sum_{k=1}^{d_c}y_k^b(t)\sum_{i=1}^{M_k} V_i^k(t,r_k(t))\stackrel{\d}{=}\sum_{\ell\in\Omega} c_\ell\e^{{Z}_\ell(T_k)},
\end{eqnarray}
where $c_l$ is a constant in $\R$ and $Z_l(T_k)\sim \log\mathcal{N}\left(m_\ell,\Sigma_\ell^2\right)$ for some $m_\ell$ and $\Sigma_\ell^2.$
\end{proof}

From the representation above, we conclude that under the BSHW model in~(\ref{eqn:FXmulti}), a portfolio consisting of swaps under different currencies can be represented as a linear combination of correlated log-normally distributed random variables. This also means that to estimate the error, we need to look at the SC method and its quality in approximating a linear combination of lognormals in~(\ref{ref:porftolioMultiLognormal}). Unfortunately, even the problem of the sum of lognormals is unresolved. We refer to an overview of attempts and approximations in~\cite{Dufresne2008SUMSOL}. Let us now proceed with the error estimates for the 1D and multi-D cases.



In the 1D or 2D instances in which the collocation points $x_i$ correspond to the zeros of an orthogonal polynomial, the following equality, because of the connection to Gauss quadrature, in $L^2$, holds:
\begin{eqnarray}
\label{eqn:error_Epsilon}
\int_\R\left(V(T_k,x)-\widetilde g(T_k,x)\right)^2f_X(x)\dx =\sum_{i=1}^{n_1}\left(V(T_k,x_i)-\widetilde g(T_k,x_i)\right)^2\omega_i +\epsilon_{n_1} = \epsilon_{n_1},
\end{eqnarray}
where $V(T_k,x)-\widetilde g(T_k,x)$ represents the difference between the exact portfolio and the SC approximated function, $f_X(x)$ is the weight function, and where $\omega_i$, for $i>0$, are the quadrature points.
When the Gauss-Hermite quadrature is used with $n_1$ collocation points, the approximation error of the CDF can be estimated as,
\begin{eqnarray*}
\epsilon_{n_1}=\frac{n_1!\sqrt{\pi}}{2^{n_1}}\frac{\Psi^{(2n_1)}(\hat\xi)}{(2n_1)!},
\end{eqnarray*}
where
\begin{eqnarray}
\label{eqn:psi_x}
\Psi(x):=\left(V(T_k,x)-\widetilde g(T_k,x)\right)^2=\left(\frac{1}{n_1!}\frac{\d^{n_1}g(T_k,x)}{\d x^{n_1}}\big|_{x=\hat\xi}\prod_{i=1}^{n_1}(x-x_i)\right)^2.
\end{eqnarray}

Since the approximations proposed in this article are also used to estimate PFEs at different significance levels, we also need to assess the so-called {\it tail risk}. Since functions
$V(T_k,x),$ and $\widetilde g(T_k,x)$ agree at the collocation points, the upper bound for a risk factor $X(T_k)$ is given by~\cite{grzelak2015stochastic}:
\begin{eqnarray}
\E\left[(V(T_k,X(T_k))-\widetilde g(T_k,X(T_k)))^2|X>x_*\right]&\leq&\frac{1}{\P[X(T_k)>x_*]}\frac{n_1!\sqrt{\pi}}{2^{n_1}}\frac{\Psi^{(2{n_1})}(\hat\xi)}{(2{n_1})!},
\end{eqnarray}
with $\Psi(x)$ defined in~(\ref{eqn:psi_x}).
In a special case, when the risk factor is normally distributed,  $X(T_k)\sim\mathcal{N}(0,1),$ and by performing integration
by parts twice, one can show that for $x_*>0$:
\begin{eqnarray}
\P[X(T_k)>x_*]\geq
\frac{1}{\sqrt{2\pi}}\e^{-x_*^2/2}\left(\frac{1}{x_*}-\frac{1}{x_*^3}\right),
\end{eqnarray}
so that the error for $x_*>1$ is bounded by:
\begin{eqnarray*}
\E\left[(V(T_k,X(T_k))-\widetilde g(T_k,X(T_k)))^2|X(T_k)>x_*\right]\leq\pi\sqrt{2}\e^{x_*^2/2}\frac{x^3_*}{x^2_*-1}\frac{n_1!}{2^{n_1}(2n_1)!}\Psi^{(2n_1)}(\hat\xi).
\end{eqnarray*}
The factorial $(2n_1)!$ in the expression above
is dominant and thus, for a smooth function $\Psi(\xi)$, we obtain:
\[\lim_{n_1\rightarrow\infty}\E\left[(V(T_k,X(T_k))-\widetilde g(T_k,X(T_k)))^2|X>x_*\right]=0.\]


Now, let us investigate the error propagation under sparse grid methods. The quality of sparse grid methods comes with strict requirements on the smoothness of the high-D functions and bounded mixed derivatives, i.e., the convergence of the method is specified for a particular {\it regularity} parameter $r$, which is defined for function spaces as follows,
\begin{equation}
\label{eqn:regularity}
F_d^r:=\Big\{f:\omega^d\rightarrow \R;\frac{\partial^\alpha f({\bf x})}{\partial {\bf x}^\alpha}\;\;\text{exists and}\;\; \Big|\Big|\frac{\partial^\alpha f({\bf x})}{\partial {\bf x}^\alpha}\Big|\Big|_\infty \leq 1 ;\;\text{if }\;\; \alpha\leq r\Big\}.
\end{equation}

Then, according to~\cite{Barthelmann2000} for $V(T_k,{\bf x})\in F_d^r$ and a particular finite domain, $\omega$, the corresponding Smolyak quadrature rule of degree $\mu$ has the asymptotic convergence rate of
\begin{equation}
\label{eqn:error}
\sup_{V(T_k,{\bf x})\in F_d^r}\big|\big| V(T_k,{\bf x})-\widetilde g(T_k,{\bf x})\big|\big|_\infty\leq \frac{c_{d,r}}{n^{r}(d,\mu)}(\log n(d,\mu))^{(r+2)(d-1)+1},
\end{equation}
where $r$ is the regularity parameter defined in~(\ref{eqn:regularity}), $n(d,\mu)$ represents the total number of grid points used in the grid construction (see Table~\ref{Tab:diemsionality}), and $c_{d,r}$ is a constant that only depends on dimension, $d$, and the regularity parameter, $r$. The error bound in~(\ref{eqn:error}) shows the relation between the number of grid points and smoothness of the pricing function.


The condition defined for the partial derivatives in~(\ref{eqn:regularity}) requires that a continuous function $f$ is contracting and is associated with a satisfied Lipschitz condition. Lemma~\ref{eqn:regularity} shows that the value of a multi-currency portfolio with interest rate swaps driven by the BSHW model in~(\ref{eqn:FXmulti}) is of an exponential form. The exponential function, $\e^{\bf x}$, is not globally Lipschitz continuous as it becomes arbitrarily steep for ${\bf x}\rightarrow+\infty$ or ${\bf x}\rightarrow-\infty$. However, any continuously differentiable function is locally Lipschitz, as continuous functions are locally bounded, therefore its gradient is locally bounded.

In Figure~\ref{fig:Cont}, the base currency portfolio, defined in Section~\ref{sec:multid}, and its derivatives to the underlying risk factor are presented. We observe a rather smooth profile of the value for different exposure dates, $T_k$. However, the portfolio's value increases exponentially for negative interest rates, especially for $T_k\rightarrow0$. Therefore, one would expect more grid points to be needed to compensate for the exponential growth of the portfolio. However, from a practical perspective, such a scenario is unrealistic as it would imply highly negative rates for short maturities. On the other hand, in the computation of xVA, exposures are multiplied with default probabilities that for $T_k\rightarrow0$ are very low; therefore, the overall impact, even for significant error, is minimal.


\begin{figure}[h!]
  \centering
    \includegraphics[width=0.49\textwidth]{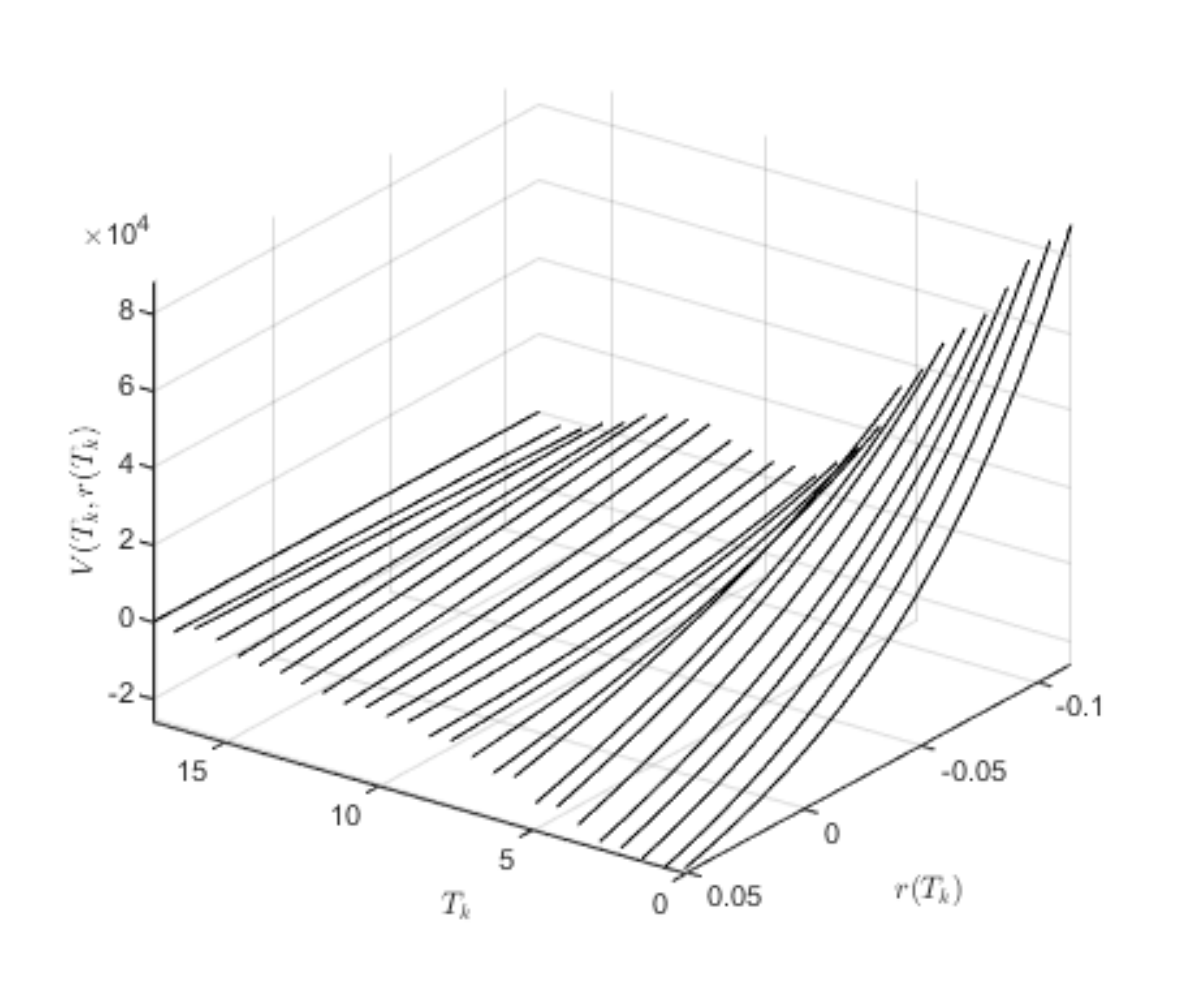}
    \includegraphics[width=0.49\textwidth]{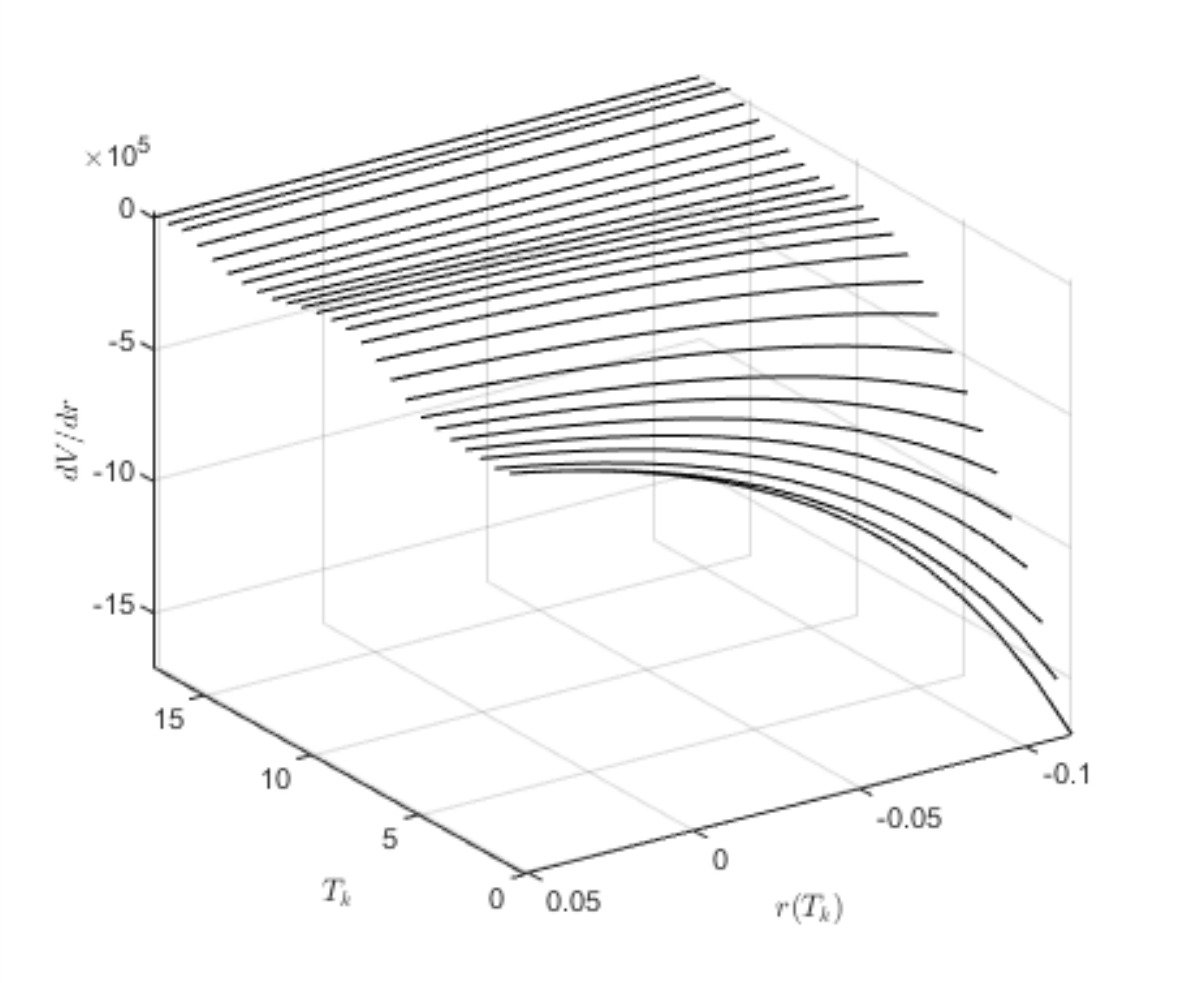}
      \caption{LHS: The portfolio of swaps, $\overline V^b(T_k,r_b(T_k))$ for $k=1,\dots,N_T$, in the base currency as a function of exposure date, $T_k$, defined in~(\ref{ref:porftolioMulti2})  and interest rate, $r(T_k)$. RHS: Derivative of the portfolio with respect to interest rates.}
      \label{fig:Cont}
\end{figure}

\subsection{Convergence with a Numerical Experiment}
This section analyzes the convergence of the SC method depending on the number of collocation points, $n_1$ (1D case), and the level parameter, $\mu$ (7D case), based on the portfolio cases discussed in Section~\ref{sec:3}. The dependence of the parameter $\mu$ on the number of the collocation points is presented in Table~\ref{Tab:diemsionality}.

Figure~\ref{fig:Error1DCase} illustrates the convergence results for the portfolio based on the Lagrange interpolation with the collocation points determined based on the quadrature rule. We report excellent results: already $n_1=4$ guarantees high accuracy. In Figure~\ref{fig:ErrornDCase}, the multi-D case is considered. As expected, the fastest convergence is obtained for $\text{EE}$, while a higher number of grid points is necessary for a satisfactory convergence in tails of the distribution, $\text{PFEs}$.

\begin{figure}[h!]
  \centering
    \includegraphics[width=0.49\textwidth]{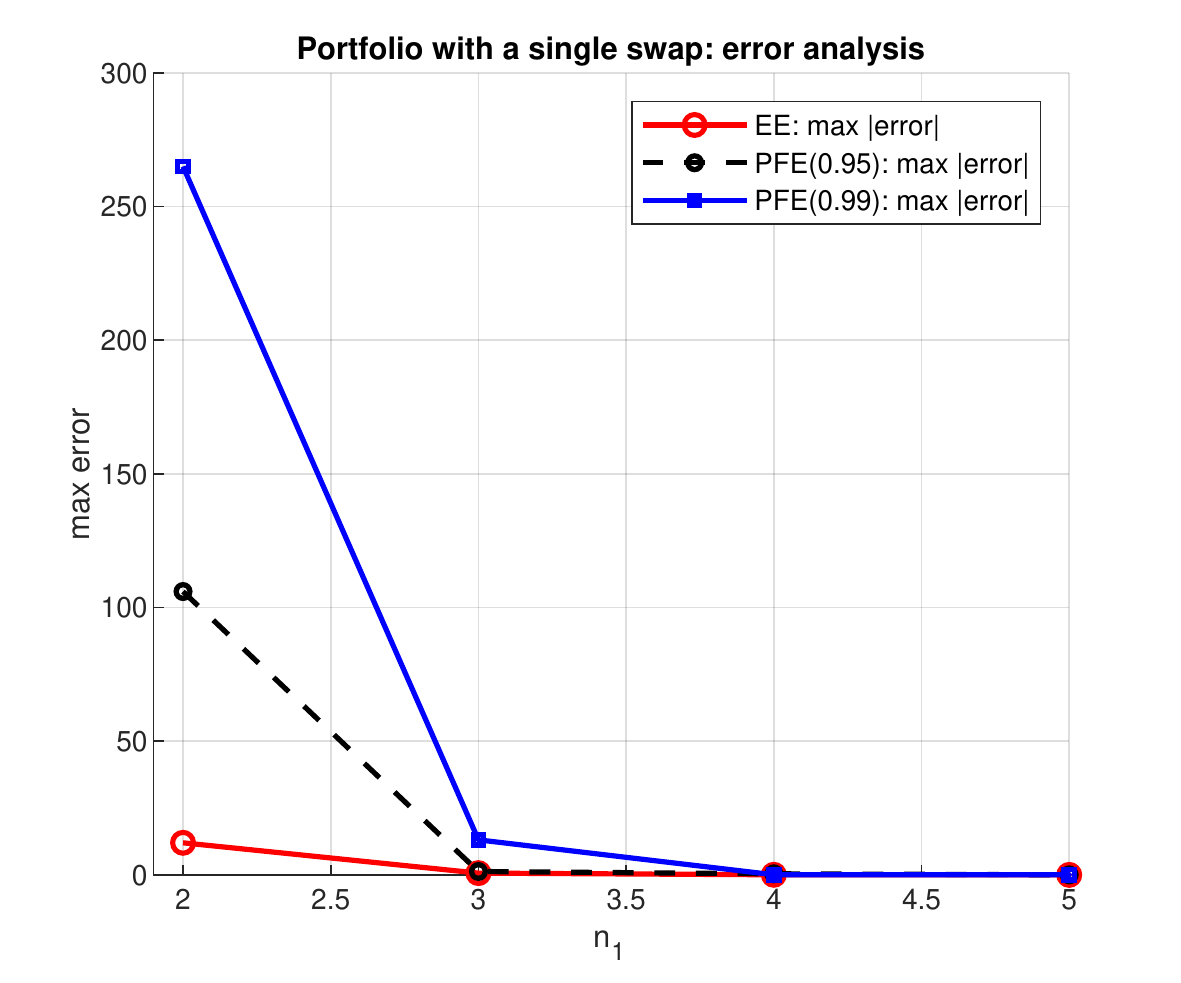}
    \includegraphics[width=0.49\textwidth]{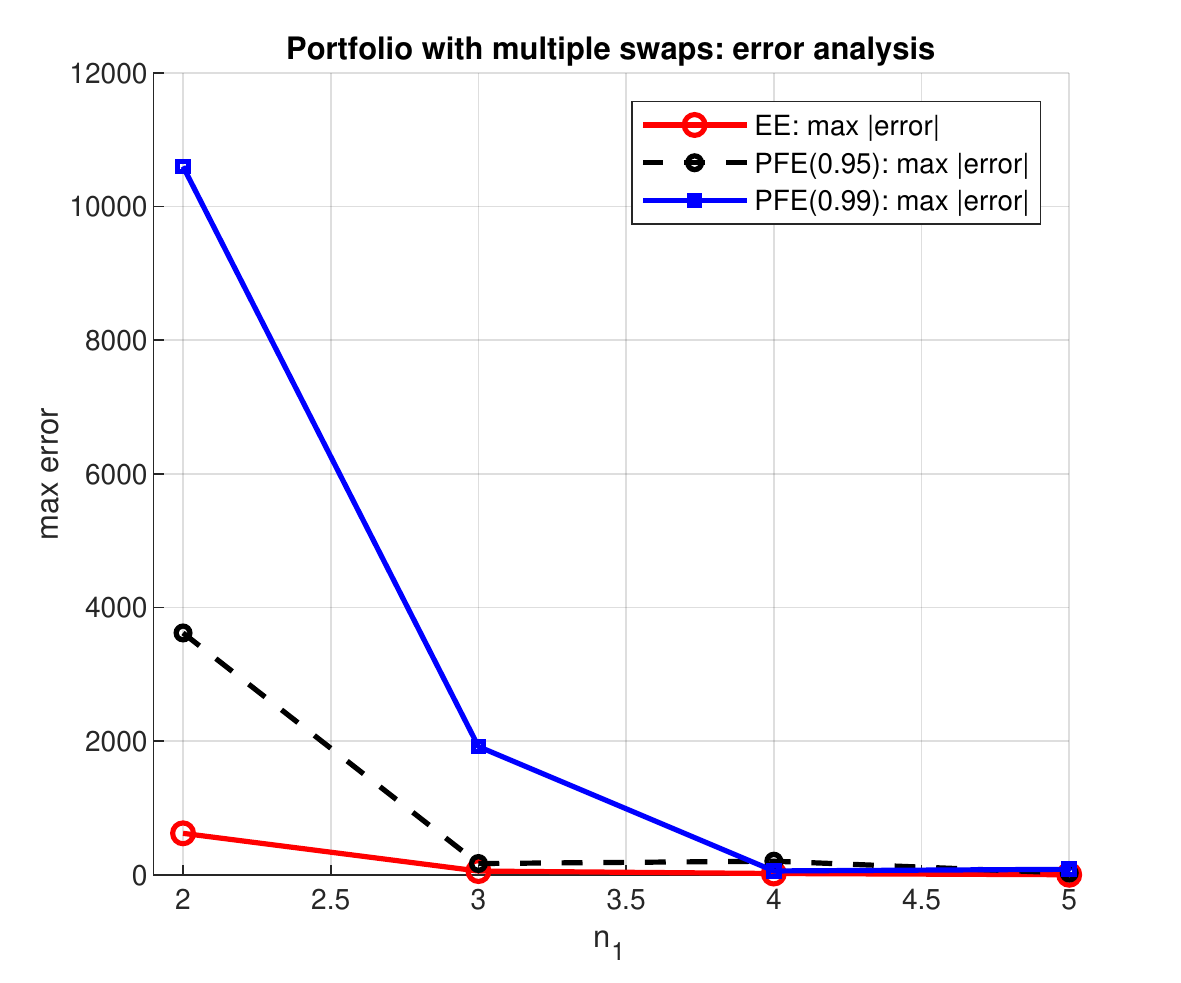}
      \caption{Error is defined as $\max_k |f(T_k)-\widetilde f(T_k)|$, where $f(T_k)$ is EE or PFE, and where $\widetilde f(T_k)$ corresponds to the SC approximation. The details regarding the portfolios under consideration are defined in Section~\ref{sec:1D}. LHS: single swap portfolio. RHS: multi-swap case.}
      \label{fig:Error1DCase}
\end{figure}

\begin{figure}[h!]
  \centering
    \includegraphics[width=0.49\textwidth]{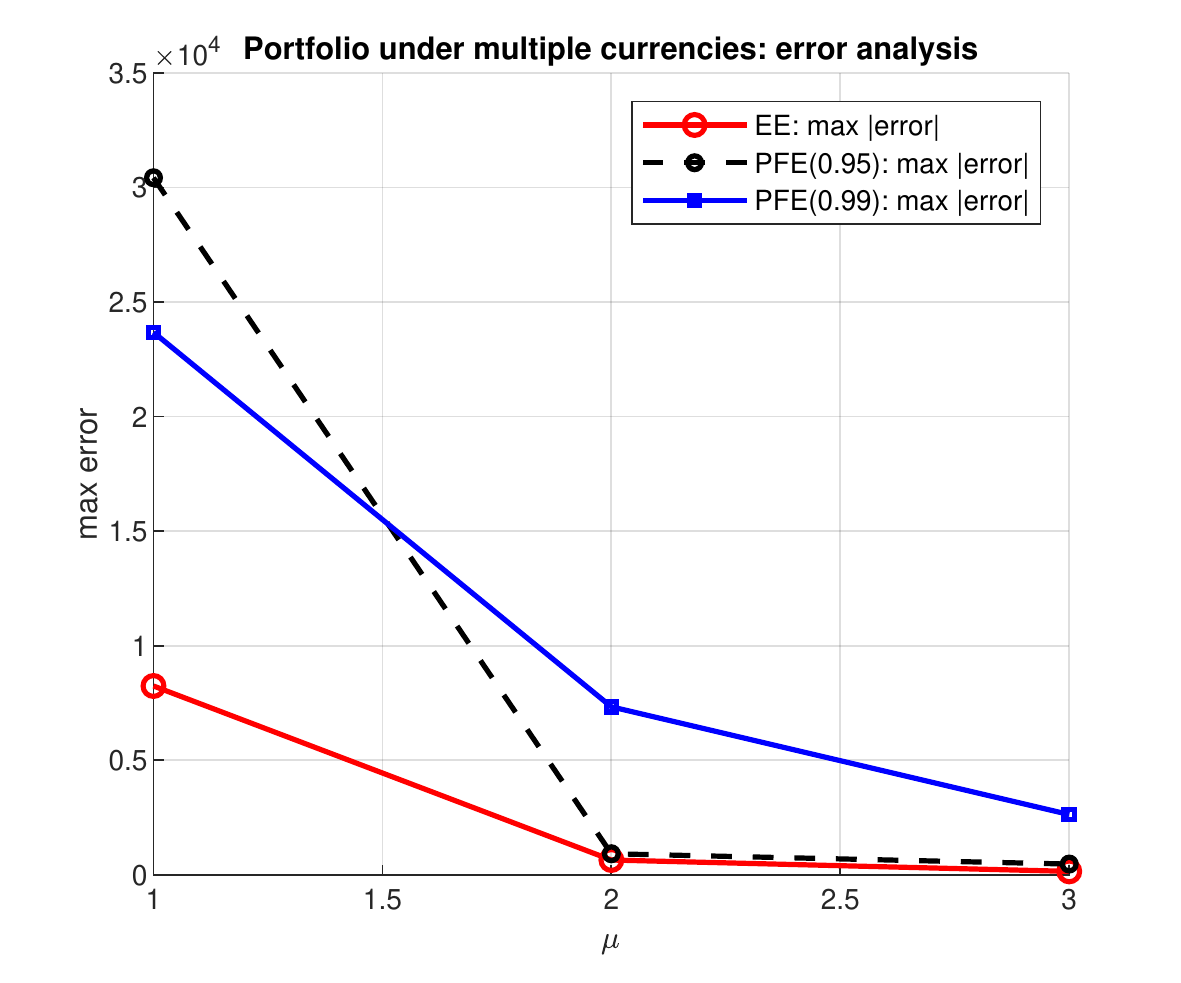}
    \includegraphics[width=0.49\textwidth]{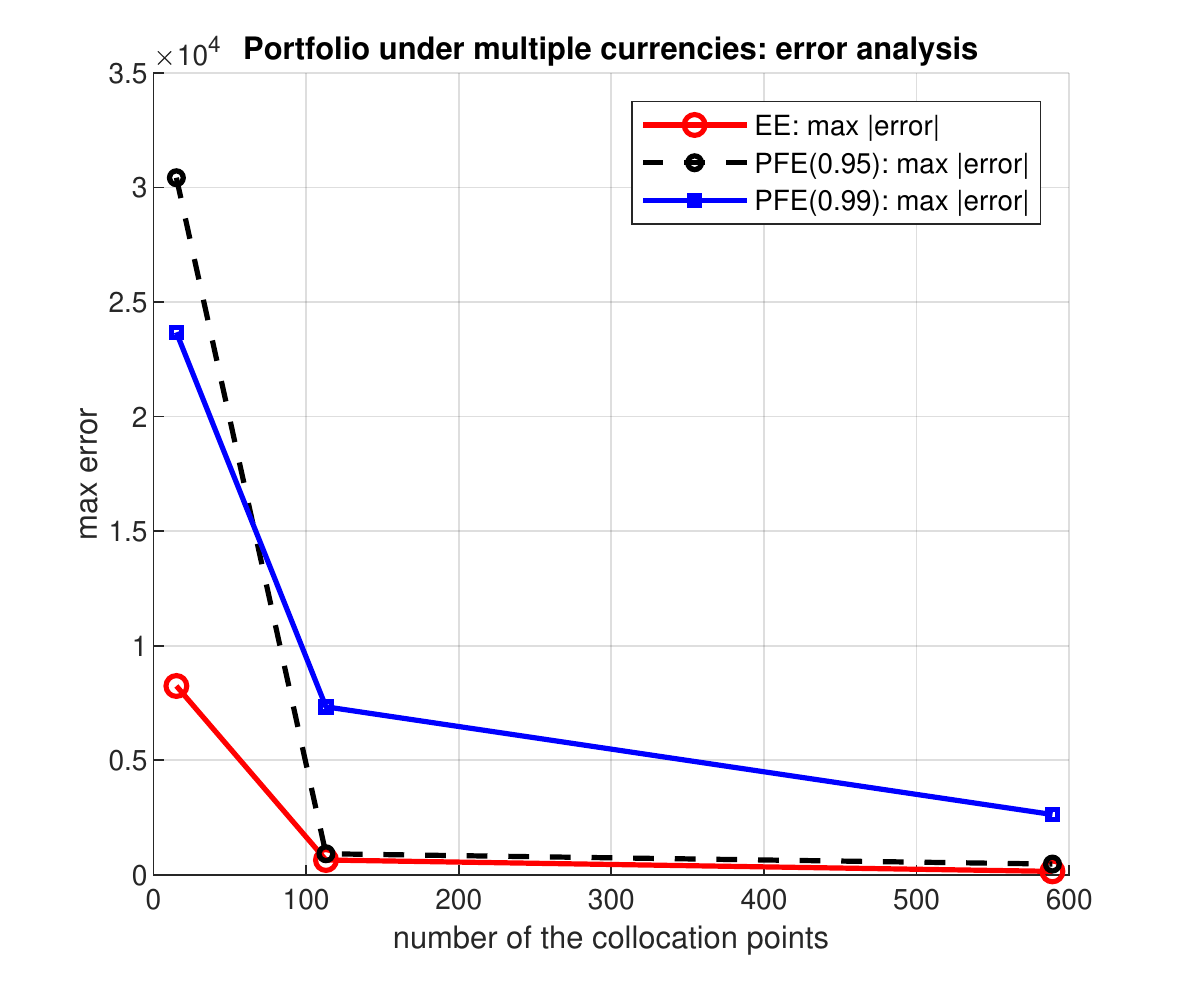}
      \caption{Error as a function of level, $\mu$, is defined as $\max_k |f(T_k)-\widetilde f(T_k)|$ where $f(T_k)$ is EE or PFE, and where $\widetilde f(T_k)$ corresponds to the SC approximation. The details regarding the portfolios under consideration are defined in Section~\ref{sec:multid}.}
      \label{fig:ErrornDCase}
\end{figure}

%


\section{Conclusions}
\label{sec:conclusions}
In this article, we have discussed the SC method extended with the sparse grid algorithm of Smolyak and its application to efficient exposure computation in financial risk management. We can drastically reduce the number of portfolio evaluations for multi-currency portfolios. The proposed algorithm is beneficial for large portfolios involving many risk factors. The model can be easily generalized to any portfolio and size. Our numerical experiments have shown that for a realistic portfolio consisting of linear and non-linear derivatives, the expected reduction in the portfolio evaluations may exceed 6000 times, depending on the number of risk factors and required accuracy. We have discussed the convergence aspects, and several realistic examples were given.

\small
\bibliographystyle{plain}
{ \bibliography{bibtexfile}}

\appendix
\section{The SC Algorithm for Exposure Computation}
\label{sec:appendix}
\begin{algorithm}
\caption{Construction and Evaluation of the Approximating Portfolio}\label{algo:SC}
\begin{algorithmic}[1]
\Procedure{SC}{}
\State Calibrate the underlying SDEs, ${\bf X}(t)=[X_1(t),\dots,X_d(t)]^\T$ , as defined in~(\ref{eqn:FXmulti}).
\State Simulate multi-dimensional process ${\bf X}(t)$ using Monte Carlo.
\For {$T_k$ with $k \in \{1,...,N_{T}\}$}
\State Given the simulated Monte Carlo paths determine, for each $X_i(T_k)$, the interpolation domain, as described in Section~\ref{sec:domainScaling}.
\State Build the SC grid using the Smolyak's algorithm, see Section~\ref{sec:2_1}.
\State Evaluate portfolio $V(T_k,V({\bf X}(T_k))$ at the grid points $\{X\}_{i_1,\dots,i_d}$ obtained in the previous step and obtain $\{V\}_{i_1,\dots,i_d}.$
\State Using Smolyak's interpolation build the approximating function,
$\widetilde g\big(\{V\}_{i_1,\dots,i_d},{\bf X}(T_k)\big)$, as described in~(\ref{eqn:proxySC}).
\State Evaluate function $\widetilde{g}(\{V\}_{i_1,\dots,i_d},{\bf X}(T_k))$ for all the Monte Carlo paths.
\State Compute EEs, PFEs etc. as defined in~(\ref{eqn:EE}) and~(\ref{eqn:PFE}).
\EndFor
\EndProcedure
\end{algorithmic}
\end{algorithm}

\section{Multi-D case: Details Regarding Model Configuration}
\label{sec:appendix2}
\begin{equation}
{\bf C}=\left[\begin{array}{ccccccc}
1&0.5&0.5&0.65&0.7&0.75&0.8\\
0.5& 1& 0.45& 0.35& 0.5& 0.5& 0.6\\
         0.5& 0.45& 1& 0.5& 0.5& 0.5& 0.7\\
         0.65& 0.35& 0.5& 1& 0.5& 0.5& 0.5\\
         0.7& 0.5& 0.5& 0.5& 1& 0.5& 0.58\\
         0.75& 0.5& 0.5& 0.5& 0.5& 1& 0.55\\
         0.8& 0.6& 0.7& 0.5& 0.58& 0.55& 1\\
         \end{array}\right],
\end{equation}
$P_1(0,t)=\exp(-0.01t)$, $P_2(0,t)=\exp(-0.015t)$,
$P_3(0,t)=\exp(-0.02t)$, $\lambda_b=0.003$,  $\lambda_1=0.003$, $\lambda_2=0.002$, $\lambda_3=0.001$, $\eta_b=0.01$, $\eta_1=0.01$,$\eta_2=0.02$,$\eta_3=0.003$, $y_1(t_0)=1.2$, $y_2(t_0)=0.86$, $y_3(t_0)=4.59.$ Considered portfolio consisted of about 30 randomly chosen interest rate swaps.


\section{Pricing of Swaptions under the Gaussian 2-Factor Model}
\label{sec:appendix3}
Under the Gaussian 2 Factor model, the short rate process, $r(t)$, is defined as: $r(t)=x^r(t)+y^r(t)+\psi^r(t)$, where processes $x^r(t)$ and $y^r(t)$ are defined by the following system of SDEs:
\begin{eqnarray*}
\d x^r(t)&=&-\lambda_1x^r(t)\dt + \eta_1\dW_1(t),\;\;\;x^r(t_0)=0,\\
\d y^r(t)&=&-\lambda_2y^r(t)\dt + \eta_2\dW_2(t),\;\;\;y^r(t_0)=0,
\end{eqnarray*}
with $\dW_1(t)\dW_2(t)=\rho\dt$, and where $\psi^r(t)$ is known in a closed form~\cite{BrigoMercurio:2007}, which is an explicit function of the zero-coupon bonds, $P_{Mrkt}(t_0,T)$, available in the market. Given simulated realizations $\{x^r(t_i),y^r(t_i)\}$ one can establish a yield curve, as a function of $T$, in terms of zero-coupon bonds,
\[P(t_i,T)=\exp\left({A(t_i,T)-\boxed{x^r(t_i)}B_1(t_i,T)-\boxed{y^r(t_i)}B_2(t_i,T)}\right),\]
where \[A(t_i,T)=\log\frac{P_{Mrkt}(t_0,T)}{P_{Mrkt}(t_0,t_i)}-\frac12\left(V^2(0,T)+V^2(0,t_i)\right),\] and $B_{j}(t_i,T)=\frac1\lambda_j(1-\e^{-\lambda_j(T-t_i)})$, $B_{1,2}(t_i,T)=\frac{1}{\lambda_1+\lambda_2}\e^{-(T-t_i)(\lambda_1+\lambda_2)}$, with $j=\{1,2\}$,
and where
{\footnotesize
\begin{eqnarray*}
V^2(t_i,T)&=&\frac{\eta_1^2}{\lambda_1^2}\left(T-t_i-B_1(t_i,T)-\frac12\lambda_2B_2^2(t_i,T)\right)+\frac{\eta_2^2}{\lambda_2^2}\left(T-t_i-B_2(t_i,T)-\frac12\lambda_1B_2^2(t_i,T)\right)\\
&&+\frac{2\eta_1\eta_2\rho}{\lambda_1\lambda_2}\left(T-t_i-B_1(t_i,T)-B_2(t_i,T)+B_{1,2}(t_i,T)\right).
\end{eqnarray*}}
\normalfont
Then the price of a European payer swaption with swaption expiry $T$ on an interest rate swap, with notional $N$, fixed rate, $K$ and a set of pay-dates $\mathcal{T}=\{T_{\alpha+1},\dots,T_{\beta}\}$:
\begin{eqnarray}
V(t_0,\mathcal{T},N,K)=NP_M(t_0,T)\int_\R\tilde\omega(x)\Big(\Phi(-h_1(x)-\sum_{i=\alpha+1}^\beta\kappa_i(x)\e^{\psi_i(x)}\Phi(-h_2(x)))\Big)\dx,
\end{eqnarray}
with the weight function, $\tilde\omega(x)$, given by:
\begin{eqnarray*}
\tilde\omega(x)=\frac{1}{\tilde\eta_1\sqrt{2\pi}}\e^{-\frac{(x-\hat\mu_1)^2}{2\hat\eta_1^2}},
\end{eqnarray*}
and where functions $h_1(x)$ and $h_2(x)$ are defined as:
\begin{eqnarray*}
h_1(x)=\frac{\bar{x}-\tilde\mu_2}{\tilde\eta_2\sqrt{1-\tilde\rho^2}}-\frac{\tilde\rho(x-\tilde\mu_1)}{\tilde\eta_1\sqrt{1-\tilde\rho^2}},\;\;h_2(x)=h_1(x)+B_2(T,T_i)\tilde\eta_2\sqrt{1-\tilde\rho^2}.
\end{eqnarray*}
Constant $\bar{x}$ is the solution of the following equation,
\[\sum_{i=\alpha+1}^\beta\kappa_i(x)\e^{-\bar x B_2(T,T_i)}=1,\]
with
\begin{eqnarray*}
\kappa_i(x)=c_i\e^{A(T,T_i)-xB_1(T,T_i)},\;\;c_i=K\tau_i,\;\;\text{for}\;\;\;\alpha<i<\beta,\;\;\;\text{and}\;\;c_\beta=1+K\tau_\beta.
\end{eqnarray*}
Finally, $\psi(x)$, and $\tilde\mu_1$ $\tilde\mu_2$ are:
\begin{eqnarray*}
\psi_i(x)&=&-B_2(T,T_i)\left(\tilde\mu_2-\frac12(\tilde\eta_2^2(1-\tilde\rho^2))B_2(T,T_i)+\tilde\rho\tilde\eta_2\frac{1}{\tilde\eta_1}(x-\tilde\mu_1)\right),\\
\tilde\mu_1&=&\frac{\eta_1^2}{2\lambda_1^2}(1-\e^{-2\lambda_1T})+\frac{\eta_1\eta_2\rho}{\lambda_2}B_{1,2}(0,T)-\left(\frac{\eta_1^2}{\lambda_1}+\frac{\eta_1\eta_2\rho}{\lambda_2} \right)B_1(0,T),\\
\tilde\mu_2&=&\frac{\eta_2^2}{2\lambda_2^2}(1-\e^{-2\lambda_2T})+\frac{\eta_1\eta_2\rho}{\lambda_1}B_{1,2}(0,T)-\left(\frac{\eta_2^2}{\lambda_2}+\frac{\eta_1\eta_2\rho}{\lambda_1} \right)B_2(0,T),
\end{eqnarray*}
with
\begin{eqnarray*}
\tilde\eta_j=\eta_j\sqrt{\frac{1-\e^{-2\lambda_jT}}{2\lambda_j}},\;\;\;\text{for}\;\;\;j\in\{1,2\},\;\;\;\text{and}\;\;\;\tilde\rho=\frac{\eta_1\eta_2\rho}{\tilde\eta_1\tilde\eta_2}B_{1,2}(0,T).
\end{eqnarray*}
\end{document}